\definecolor{darkred}{rgb}{0.8,0.1,0.1}
\newcommand{\SIO}{\text{\rm SIO}}
\newcommand{\IO}{\text{\rm IO}}
\newcommand{\DIO}{\text{\rm DIO}}
\newcommand{\MIO}{\text{\rm MIO}}
\newcommand{\SSIO}{\text{\rm (S)IO}}
\newcommand{\Choi}{Choi-Jamio\l{}kowski }
\definecolor{darkblue}{RGB}{0,76,156}
\definecolor{darkkblue}{RGB}{0,0,153}
\definecolor{blue2}{RGB}{102,178,255}
\newtheorem{proposition}{Proposition}
\newtheorem{lemma}[proposition]{Lemma}
\newtheorem{theorem}[proposition]{Theorem}
\newtheorem{remark}{Remark}
\def\squareforqed{\hbox{\rlap{$\sqcap$}$\sqcup$}}
\def\qed{\ifmmode\squareforqed\else{\unskip\nobreak\hfil
\penalty50\hskip1em\null\nobreak\hfil\squareforqed
\parfillskip=0pt\finalhyphendemerits=0\endgraf}\fi}
\def\endenv{\ifmmode\;\else{\unskip\nobreak\hfil
\penalty50\hskip1em\null\nobreak\hfil\;
\parfillskip=0pt\finalhyphendemerits=0\endgraf}\fi}
\newenvironment{proof}{\noindent \textbf{{Proof.}~}}{\hfill $\blacksquare$}
\mathchardef\ordinarycolon\mathcode`\:
\def\vcentcolon{\mathrel{\mathop\ordinarycolon}}
\def\resetMathstrut@{%
    \setbox\z@\hbox{%
        \mathchardef\@tempa\mathcode`\[\relax
        \def\@tempb##1"##2##3{\the\textfont"##3\char"}%
        \expandafter\@tempb\meaning\@tempa \relax
    }%
    \ht\Mathstrutbox@\ht\z@ \dp\Mathstrutbox@\dp\z@}
\newcommand{\nc}{\newcommand}
\nc{\rnc}{\renewcommand}
\nc{\beg}{\begin{equation}}
\nc{\eeq}{{\end{equation}}}
\nc{\beqa}{\begin{eqnarray}}
\nc{\eeqa}{\end{eqnarray}}
\nc{\lbar}[1]{\overline{#1}}
\nc{\bra}[1]{\langle#1|}
\nc{\ket}[1]{|#1\rangle}
\nc{\ketbra}[2]{|#1\rangle\!\langle#2|}
\nc{\braket}[2]{\langle#1|#2\rangle}
\nc{\proj}[1]{| #1\rangle\!\langle #1 |}
\nc{\avg}[1]{\langle#1\rangle}
\nc{\Rank}{\operatorname{Rank}}
\nc{\smfrac}[2]{\mbox{$\frac{#1}{#2}$}}
\nc{\tr}{\operatorname{Tr}}
\nc{\ox}{\otimes}
\nc{\dg}{\dagger}
\nc{\dn}{\downarrow}
\nc{\cA}{{\cal A}}
\nc{\cB}{{\cal B}}
\nc{\cC}{{\cal C}}
\nc{\cD}{{\cal D}}
\nc{\cE}{{\cal E}}
\nc{\cF}{{\cal F}}
\nc{\cG}{{\cal G}}
\nc{\cH}{{\cal H}}
\nc{\cI}{{\cal I}}
\nc{\cJ}{{\cal J}}
\nc{\cK}{{\cal K}}
\nc{\cL}{{\cal L}}
\nc{\cM}{{\cal M}}
\nc{\cN}{{\cal N}}
\nc{\cO}{{\cal O}}
\nc{\cP}{{\cal P}}
\nc{\cQ}{{\cal Q}}
\nc{\cR}{{\cal R}}
\nc{\cS}{{\cal S}}
\nc{\cT}{{\cal T}}
\nc{\cV}{{\cal V}}
\nc{\cX}{{\cal X}}
\nc{\cY}{{\cal Y}}
\nc{\cZ}{{\cal Z}}
\nc{\cW}{{\cal W}}
\nc{\csupp}{{\operatorname{csupp}}}
\nc{\qsupp}{{\operatorname{qsupp}}}
\nc{\var}{{\operatorname{var}}}
\nc{\rar}{\rightarrow}
\nc{\lrar}{\longrightarrow}
\nc{\polylog}{{\operatorname{polylog}}}
\nc{\wt}{{\operatorname{wt}}}
\nc{\av}[1]{{\left\langle {#1} \right\rangle}}
\nc{\supp}{{\operatorname{supp}}}
\nc{\dia}{{\diamondsuit }}
\newcommand{\Tr}{\text{Tr}\,}
\def\ve{\varepsilon}
\def\o{\omega}
\nc{\RR}{{{\mathbb R}}}
\nc{\CC}{{{\mathbb C}}}
\nc{\FF}{{{\mathbb F}}}
\nc{\NN}{{{\mathbb N}}}
\nc{\ZZ}{{{\mathbb Z}}}
\nc{\PP}{{{\mathbb P}}}
\nc{\QQ}{{{\mathbb Q}}}
\nc{\UU}{{{\mathbb U}}}
\nc{\EE}{{{\mathbb E}}}
\nc{\id}{{\operatorname{id}}}
\nc{\CHSH}{{\operatorname{CHSH}}}
\nc{\be}{\begin{equation}}
\nc{\ee}{{\end{equation}}}
\nc{\bea}{\begin{eqnarray}}
\nc{\eea}{\end{eqnarray}}
\nc{\Hom}[2]{\mbox{Hom}(\CC^{#1},\CC^{#2})}
\nc{\rU}{\mbox{U}}
\nc{\ob}[1]{#1}
\nc{\SEP}{{\text{SEP}}}
\nc{\NS}{{\text{NS}}}
\nc{\LOCC}{{\text{LOCC}}}
\nc{\PPT}{{\text{PPT}}}
\nc{\EXT}{{\text{EXT}}}
\nc{\Sym}{{\operatorname{Sym}}}
\DeclareMathOperator{\rank}{rank}
\nc{\HH}{\mathbb{H}}
\nc{\ERLO}{{E_{\text{r,LO}}}}
\nc{\ERLOCC}{{E_{\text{r,LOCC}}}}
\nc{\ERPPT}{{E_{\text{r,PPT}}}}
\nc{\ERLOCCinfty}{{E^{\infty}_{\text{r,LOCC}}}}
\nc{\Aram}{{\operatorname{\sf A}}}
\rnc{\bar}{\;\rule{0pt}{9.5pt}\right|\;}
\nc{\lset}{\left\{\left.}
\nc{\rset}{\right\}}
\nc{\lsetr}{\left\{}
\nc{\rsetr}{\right.\right\}}
\nc{\barr}{\left|\rule{0pt}{9.5pt}\;}
\DeclareMathOperator*{\argmin}{arg\,min}
\DeclareMathOperator{\Det}{Det}
\let\id\1
\newcommand{\cond}[1]{\text{(#1)}}
\nc{\norm}[2]{\left\lVert#1\right\rVert_{#2\!}}
\nc{\lnorm}[2]{\left\lVert#1\right\rVert_{\ell_{#2}}}
\nc{\subO}{\cO_{\text{\rm sub}}}
\nc{\subMIO}{\MIO_{\text{\rm sub}}}
\nc{\uts}{\affiliation{Centre for Quantum Software and Information, Faculty of Engineering and Information Technology, \\ University of Technology Sydney, NSW 2007, Australia}}
\nc{\notts}{\affiliation{School of Mathematical Sciences and Centre for the Mathematics and Theoretical Physics of Quantum Non-Equilibrium Systems,\\University of Nottingham, University Park, Nottingham NG7 2RD, United Kingdom}}
\begin{document}
	\title{Probabilistic distillation of quantum coherence}
	\author{Kun Fang}
	\email{kun.fang-1@student.uts.edu.au}
	\uts
	
	\author{Xin Wang}
	\email{xin.wang-8@student.uts.edu.au}
	\uts
	
	\author{Ludovico Lami}
	\email{ludovico.lami@gmail.com}
	\notts
	
	\author{Bartosz Regula}
	\email{bartosz.regula@gmail.com}
	\notts
	
	\author{Gerardo Adesso}
	\email{gerardo.adesso@nottingham.ac.uk}
	\notts
	

\begin{abstract}
	
The ability to distill quantum coherence is pivotal for optimizing the performance of quantum technologies; however, such a task cannot always be accomplished with certainty. Here we develop a general framework of {\em probabilistic} distillation of quantum coherence in a one-shot setting, establishing fundamental limitations for different classes of free operations.
We first provide a geometric interpretation for the maximal success probability, showing that under maximally incoherent operations (MIO) and dephasing-covariant incoherent operations (DIO) the problem can be simplified into efficiently computable semidefinite programs. Exploiting these results, we find that DIO and its subset of strictly incoherent operations (SIO) have equal power in probabilistic distillation of coherence from pure input states, while MIO are strictly stronger. We then prove a fundamental no-go result: distilling coherence from any full-rank state is impossible even probabilistically. We further find that in some conditions the maximal success probability can vanish suddenly beyond a certain threshold in the distillation fidelity. Finally, we consider probabilistic coherence distillation assisted by a catalyst and demonstrate, with specific examples, its superiority to the unassisted and deterministic cases.
\end{abstract}

\maketitle

Quantum coherence is a physical resource that is essential for various tasks in quantum computing (e.g.~Deutsch-Jozsa algorithm~\cite{hillery2016coherence}), cryptography (e.g.~quantum key distribution~\cite{coles2016numerical}), information processing (e.g.~quantum state merging~\cite{streltsov_2016}, state redistribution~\cite{anshu2018quantum} and channel simulation~\cite{diaz2018}), thermodynamics~\cite{lostaglio2015description}, metrology~\cite{Frowis2011}, and quantum biology~\cite{streltsov_2017}. A series of efforts have been devoted to building a resource framework of coherence in recent years~\cite{aberg_2006,gour_2008,levi_2014,baumgratz_2014,streltsov_2017}, characterizing in particular the state transformations and operational uses of coherence in fundamental resource manipulation protocols~\cite{winter_2016,chitambar_2016-3,streltsov_2016,chitambar_2016-2,zhao_2018,regula_2017-1}. As in any physical resource theory, a central problem of the resource theory of quantum coherence is \emph{distillation}: the process of extracting canonical units of coherence, as represented by the maximally coherent state $\ket{\Psi_m}$, from a given quantum state using a choice of free operations.

The usual asymptotic approach to studying the problem in quantum information theory is to assume that there is an unbounded number of independent and identically distributed copies of a quantum state available and that the transformation error asymptotically goes to zero~\cite{bennett_1996-1,bennett_1996-3,rains_1999,winter_2016}. In reality, these assumptions become unphysical due to our limited access to a finite number of copies of a given state, making it necessary to look at non-asymptotic regimes \cite{zhao_2018,regula_2017-1}. Furthermore, since loss and decoherence severely restrict our ability to manipulate large quantum systems, one needs to allow for a finite error in the distillation protocol. In this respect, deterministic protocols such as those studied in \cite{regula_2017-1} may be insufficient to reach a target  fidelity for desired applications. It is thus of importance to consider a more general {\em probabilistic} framework, in which the distillation will succeed only with some probability. Here, the allowed error can be characterized by two key parameters: the success probability of the one-shot distillation process, and the fidelity between the extracted state and the target state $\ket{\Psi_m}$. To have a systematic understanding of coherence distillation with finite resources and be able to implement practical schemes for this task, it is crucial to describe and optimize the interplay between these two parameters.

In this Letter, we develop the framework of probabilistic coherence distillation, characterizing the relation between the maximum success probability and the fidelity of distillation in the one-shot setting. We describe qualitative and quantitative aspects of this task under several representative choices of free operations, providing insights into their fundamental limitations and capabilities for coherence manipulation. In particular, we achieve a complete characterization of probabilistic coherence distillation with pure input states. The main results of our study are presented as Theorems in the following, with all proofs delegated to the Supplemental Material \footnote{See the Supplemental Material for technical derivations and proofs of the results in the Letter, which includes Refs.~[22-25] }. \nocite{zhang_2005,bhatia_2007,ben-israel_2003,chitambar_2016,regula_2017-1}
Before proceeding, we note that, previously, the framework of probabilistic state transformations has been employed in characterizing entanglement distillation~\cite{bennett_1996-1,lo_2001,vidal_1999-1,jonathan_1999,ishizaka_2005} as well as related settings in the resource theory of thermodynamics~\cite{alhambra_2016}, and recently found use in the investigation of practical entanglement distillation schemes~\cite{rozpedek_2018}. Our work fills an important gap in the literature by establishing the probabilistic toolbox for the key resource of quantum coherence.

\vspace{0.1cm}
\textbf{\textit{Framework of probabilistic coherence distillation.}}---The free states in the resource theory of quantum coherence, so-called incoherent states $\cI$, are the density operators which are diagonal in a given reference orthonormal basis $\{\ket{i}\}$. We will use $\Delta(\cdot):=\sum_i \ket{i}\bra{i} \cdot \ket{i}\bra{i}$ to denote the diagonal map (completely dephasing channel) in this basis. The resource theory of coherence is known not to admit a unique physically-motivated choice of allowed free operations \cite{winter_2016,chitambar_2016,marvian_2016,vicente_2017,streltsov_2017}, necessitating the investigation of operational capabilities of several different classes of maps. The relevant choices of free operations that we will focus on are: \emph{maximally incoherent operations (MIO)}~\cite{aberg_2006}, defined to be all operations $\cE$ such that $\cE(\rho) \in \cI$ for every $\rho \in \cI$;  \emph{dephasing-covariant incoherent operations (DIO)}~\cite{chitambar_2016,marvian_2016}, which are maps $\cE$ such that $[\Delta, \cE] = 0$, or equivalently $\cE(\ket{i}\bra{i}) \in \cI$ and $\Delta(\cE(\ket{i}\bra{j})) = 0, \forall i \neq j$; \emph{incoherent operations (IO)}~\cite{baumgratz_2014}, which admit a set of incoherent Kraus operators $\{K_l\}$ such that $\frac{K_l \rho K_l^\dagger}{\tr K_l \rho K_l^\dagger} \in \cI$ for all $l$ and $\rho\in \cI$; as well as \emph{strictly incoherent operations (SIO)}~\cite{winter_2016}, which are operations such that both $\{K_l\}$ and $\{K_l^\dagger\}$ are sets of incoherent operators. In particular, MIO is the largest possible choice of free operations in coherence theory,
while SIO can be regarded as the smallest class which satisfies desirable resource theoretic criteria \cite{chitambar_2016,streltsov_2017}, leading to the hierarchy $\SIO \subsetneq \IO \subsetneq \MIO$, $\SIO \subsetneq \DIO \subsetneq \MIO$ \cite{chitambar_2016}.

The basic task of probabilistic distillation can be understood as follows. For any given quantum state $\rho$ held by a single party $A$, we aim to transform this state to an $m$-dimensional maximally coherent state (target state) $\ket{\Psi_m}:=m^{-1/2}\sum_{i=1}^m \ket{i}$ with high fidelity. A single-bit classical flag register $L$ is used to indicate whether the transformation succeeds or not. If the flag is found in the $0$ state, the distillation process has succeeded and the output state $\sigma$ has fidelity at least $1-\ve$ with the target state. Otherwise, the process has failed, and we discard the unwanted output state $\omega$.
Our goal is to maximize the success probability while keeping the transformation infidelity within some tolerance $\ve$.

Formally, for any triplet $(\rho,m,\ve)$ with a given initial state $\rho$, target state dimension $m$, and infidelity tolerance $\ve$, the maximal success probability of coherence distillation under the operation class $\cO \in \{\SIO,\IO,\DIO,\MIO\}$ is denoted as $P_{\cO}(\rho\!\to\! \Psi_m, \ve)$, where  $\Psi_m:=\ket{\Psi_m}\bra{\Psi_m}$. This is given by the maximal value of $p$ such that there exists a transformation $\Pi_{A\to LB}\in \cO$ satisfying the constraints
\begin{equation}\label{prob 1}
\begin{aligned}
  \Pi_{A\to LB}(\rho) &= p \ket{0}\bra{0}_L \ox \sigma + (1-p) \ket{1}\bra{1}_L\ox \o,\\
  F(\sigma,\Psi_m) &\geq 1-\ve,
\end{aligned}
\end{equation}
where $F(\rho,\sigma):=\|\sqrt{\rho}\sqrt{\sigma}\|_1^2$ is the fidelity and $\|\cdot\|_1$ is the trace norm.
If the distillation fails, we can perform a free operation to make the unwanted state $\omega$ completely mixed without changing the success probability. Thus, without loss of generality, we can take $\omega = \1/m$.
Exploiting the fact that the target state $\Psi_m$ is invariant under the twirling operation $\cT(\rho) = \frac1{d!} \sum_{i=1}^{d!} P_i \rho P_i$, where $P_i$ are all the permutations on the input system and $d$ is the input dimension, we can also fix the optimal output state as $\sigma = \Psi_m^\ve$ where $\Psi_m^\ve:=(1-\ve)\Psi_m+\ve (\1-\Psi_m)/(m-1)$.
Specifically, for any optimal output state $\sigma$, we can further perform the free operation $\cT$, which gives a new output state $\cT(\sigma)$ always in the form of $ a \Psi_m + b (\1 - \Psi_m)/(m-1)$,
where we can choose $a = 1-\ve$ and $b = \ve$ while keeping the  fidelity with the target state and the optimal success probability unchanged.
This allows us to write
 $P_{\cO}(\rho\!\to\!\Psi_m,\ve) = P_{\cO}(\rho\!\to\!\Psi_m^\ve,0)$,
meaning that the maximal success probability of coherence distillation is the same as the maximal success probability of transforming the given state to the target  $\Psi_m^\ve$ with fidelity one.

\begin{figure}[t]
\centering
\includegraphics[width=3.8cm]{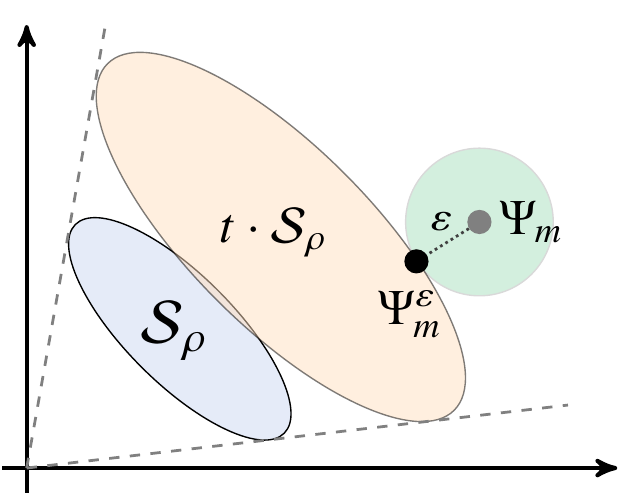}
\caption{Geometric interpretation of the maximal success probability of coherence distillation. See text for details.}
\label{geo}
\end{figure}

\vspace{0.1cm}
\textbf{\textit{Computing the maximum distillation probability.}}---
We now set out to find efficiently computable expressions for the maximal distillation probability.
Consider a generalization of the set $\cO$ to the class $\subO$ of subnormalized quantum operations, that is, completely positive and trace-nonincreasing maps.
Using this notation, we can conveniently express the maximal success probability as follows (see also \cite{ishizaka_2005,buscemi_2017}).
\vspace{-.25\baselineskip}
\begin{proposition}\label{proposition 1}
  For any triplet $(\rho,m,\ve)$ and operation class $\cO$, the maximal success probability $P_{\cO}(\rho\!\to\! \Psi_m,\ve)$ is given by
    $\max \lsetr p\in \mathbb{R_+} \barr \cE(\rho) = p \cdot\Psi_m^\ve,\ \cE \in \subO \rsetr$.
It then holds that $P_{\cO}\left(\rho\!\to\! \Psi_m,\ve\right)^{-1} = \min \big\{ t\in \mathbb{R_+} \big| \Psi_m^\ve \in t \cdot \cS_\rho \big\}$ where $\cS_\rho := \big\{ \cE(\rho) \,\big|\, \cE \in \subO \big\}$ is the set of all the output operators of $\rho$ under the operation class $\subO$.
\end{proposition}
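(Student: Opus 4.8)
The plan is to first invoke the reduction $P_{\cO}(\rho\!\to\!\Psi_m,\ve) = P_{\cO}(\rho\!\to\!\Psi_m^\ve,0)$ established above, so that it suffices to characterize the maximal probability of mapping $\rho$ to the \emph{exact} output $\Psi_m^\ve$ in the success branch. I would then establish a two-way correspondence between flagged operations in $\cO$ and subnormalized operations in $\subO$. For the forward direction, given any $\Pi_{A\to LB}\in\cO$ of the form in Eq.~\eqref{prob 1} with $\sigma=\Psi_m^\ve$, I project onto the success flag and define $\cE(\cdot):=\bra{0}_L\,\Pi_{A\to LB}(\cdot)\,\ket{0}_L$. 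This map is completely positive and trace-nonincreasing, and being the success branch of an element of $\cO$ it belongs to $\subO$; by construction $\cE(\rho)=p\,\Psi_m^\ve$, so any achievable $p$ is admissible on the right-hand side.

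For the converse, given any $\cE\in\subO$ with $\cE(\rho)=p\,\Psi_m^\ve$, I complete it to a genuine element of $\cO$ by appending a failure branch: I choose a complementary completely positive map that restores trace preservation and routes the discarded weight onto the maximally mixed state $\1/m$ in register $B$, tagged by $\ket{1}\bra{1}_L$. The resulting channel reproduces Eq.~\eqref{prob 1} with success probability $p$ and failure output $\omega=\1/m$. Combining the two directions yields the first formula, $P_{\cO}(\rho\!\to\!\Psi_m,\ve)=\max\lsetr p\in\mathbb{R}_+\barr \cE(\rho)=p\,\Psi_m^\ve,\ \cE\in\subO\rsetr$.

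For the geometric reformulation I would exploit that $\subO$ contains the zero map and is closed under scaling by any factor in $[0,1]$, so that the output set $\cS_\rho$ contains the origin and satisfies $\lambda\,\cS_\rho\subseteq\cS_\rho$ for all $\lambda\in[0,1]$. Under these properties the membership $p\,\Psi_m^\ve\in\cS_\rho$ is equivalent, via the substitution $t=1/p$, to $\Psi_m^\ve\in t\,\cS_\rho$. Hence the largest admissible $p$ equals the reciprocal of the smallest admissible $t$, which gives $P_{\cO}(\rho\!\to\!\Psi_m,\ve)^{-1}=\min\big\{t\in\mathbb{R}_+\,\big|\,\Psi_m^\ve\in t\,\cS_\rho\big\}$. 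This is precisely the gauge (Minkowski functional) of $\cS_\rho$ evaluated at $\Psi_m^\ve$, matching the geometric picture of Fig.~\ref{geo}.

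The main obstacle is the converse direction of the correspondence: verifying that the completion of a subnormalized map to a trace-preserving channel can always be carried out \emph{within} the specific class $\cO$. This requires checking, for each of $\SIO$, $\IO$, $\DIO$, and $\MIO$ separately, that appending the maximally-mixed failure branch preserves the defining structural constraints of the class (incoherence of the augmented Kraus set for $\SIO$ and $\IO$, commutation with $\Delta$ for $\DIO$, and preservation of $\cI$ for $\MIO$). Once this is secured, the remaining scaling argument for the geometric identity is routine.
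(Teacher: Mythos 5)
Your proposal is correct and follows essentially the same route as the paper's own proof: reduce to exact transformation onto $\Psi_m^\ve$, identify the success branch of a flagged channel $\Pi \in \cO$ with a subnormalized map $\cE \in \subO$, complete any such $\cE$ back to a channel in $\cO$ by routing the residual trace onto $\1/m$ in the failure branch, and obtain the gauge formula by the substitution $t = 1/p$. The per-class verification you flag as the main obstacle (that the maximally-mixed completion stays inside $\SIO$, $\IO$, $\DIO$, $\MIO$) is precisely the step the paper itself leaves implicit in its ``we can check that $\Pi_{A\to LB}\in\cO$ if and only if $\cE_0,\cE_1\in\subO$ and $\cE_0+\cE_1$ is trace preserving.''
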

\vspace{-.25\baselineskip}
The result simplifies the optimization of the maximal success probability via subnormalized free operations, providing a geometric interpretation for the maximal success probability as a gauge function~\cite{rockafellar_1970}, as shown in Fig.~\ref{geo}.
This justifies our intuition that the closer the state $\rho$ to $\Psi_m^\ve$, the less we need to expand the set $\cS_\rho$, and thus the larger success probability we can obtain. We note that the result in Prop.~\ref{proposition 1} can be extended also to more general convex resource theories \cite{brandao_2015,regula_2018,chitambar_2018-1}.

By further exploiting the symmetry of $\Psi_m^\ve$, we can compute the maximal success probability under MIO/DIO via the following semidefinite programs (SDPs).
\begin{theorem}\label{SDP thm}
  For any triplet $(\rho,m,\ve)$, the maximal success probability of distillation under MIO and DIO are
  \begin{subequations}
	\begin{align}
P_{\MIO}(\rho\!\to\! \Psi_m,\ve) =	\text{\rm max.} &\ \tr G\rho \notag\\
	\text{\rm s.t.} &\ \ \Delta(G) = m \Delta(C), \label{eq:c1}\\
	&\ \ 0 \leq C \leq G \leq \1, \label{eq:c2}\\
	&\ \tr C\rho \geq (1-\ve) \tr G \rho, \label{eq:c3} \\
P_{\DIO}(\rho\!\to\! \Psi_m,\ve) = \text{\rm max.} &\ \tr G\rho \notag\\
	\text{\rm s.t.} &\ \ \text{\rm Eqs.}~(\ref{eq:c1},\ref{eq:c2},\ref{eq:c3}),\ G = \Delta(G).  \notag
\end{align}	
\end{subequations}
\end{theorem}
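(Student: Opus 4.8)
The plan is to start from Proposition~\ref{proposition 1}, which already reduces $P_{\cO}(\rho\!\to\!\Psi_m,\ve)$ to maximizing $p=\tr\cE(\rho)$ over subnormalized free operations $\cE\in\subO$ subject to $\cE(\rho)=p\,\Psi_m^\ve$. First I would relax the exact proportionality to the single scalar inequality $\tr[\Psi_m\cE(\rho)]\ge(1-\ve)\tr\cE(\rho)$: since the output twirl $\cT$ lies in every class $\cO$ and is trace preserving, composing any feasible $\cE$ with $\cT$ symmetrizes its output to the form $a\Psi_m+b(\1_m-\Psi_m)/(m-1)$ without changing either $\tr\cE(\rho)$ or $\tr[\Psi_m\cE(\rho)]$, so the relaxed problem has the same optimal value. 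The decisive move is then the change of variables $G:=\cE^{\dagger}(\1_m)$ and $C:=\cE^{\dagger}(\Psi_m)$, where $\cE^{\dagger}$ is the Hilbert--Schmidt adjoint map; these are operators on the input space with $\tr\cE(\rho)=\tr G\rho$ and $\tr[\Psi_m\cE(\rho)]=\tr C\rho$, so the objective and constraint~(\ref{eq:c3}) are recovered at once.

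For the direction $P_{\cO}\le(\text{SDP})$ I would read off the constraints on $(G,C)$ forced by $\cE\in\subO$. Complete positivity makes $\cE^{\dagger}$ positive, so $0\le\Psi_m\le\1_m$ gives $0\le C\le G$, while trace-nonincrease is subunitality $\cE^{\dagger}(\1_m)\le\1$, i.e.\ $G\le\1$; this is~(\ref{eq:c2}). Constraint~(\ref{eq:c1}) is the only one using the defining property of the class: for MIO, $\cE(\ketbra{k}{k})$ is diagonal, hence $\bra{k}C\ket{k}=\tr[\Psi_m\cE(\ketbra{k}{k})]=\tr[\Delta(\Psi_m)\cE(\ketbra{k}{k})]=\tfrac1m\tr\cE(\ketbra{k}{k})=\tfrac1m\bra{k}G\ket{k}$, using $\Delta(\Psi_m)=\1_m/m$, which is exactly $\Delta(G)=m\Delta(C)$. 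For DIO the commutation $[\Delta,\cE]=0$ dualizes to $\Delta\cE^{\dagger}=\cE^{\dagger}\Delta$, whence $\Delta(G)=\Delta\cE^{\dagger}(\1_m)=\cE^{\dagger}(\1_m)=G$, giving the extra diagonality constraint.

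The reverse inequality (achievability) is the crux, and I would settle it by an explicit construction rather than by duality. Given any feasible $(G,C)$, define the measure-and-prepare map $\cE(X):=\tr[CX]\,\Psi_m+\tr[(G-C)X]\,\tfrac{\1_m-\Psi_m}{m-1}$. It is manifestly completely positive because $C\ge0$ and $G-C\ge0$ by~(\ref{eq:c2}), and trace-nonincreasing because $\tr\cE(X)=\tr GX$ with $G\le\1$. The \emph{main obstacle} is to verify that this map is actually free: on a diagonal input it reduces to the blocks $\cE(\ketbra{k}{k})=C_{kk}\Psi_m+(G_{kk}-C_{kk})\tfrac{\1_m-\Psi_m}{m-1}$, whose off-diagonal part is proportional to $C_{kk}-\tfrac{G_{kk}-C_{kk}}{m-1}$, and this vanishes \emph{precisely} when $G_{kk}=mC_{kk}$, i.e.\ under~(\ref{eq:c1}); thus $\cE$ sends incoherent states to incoherent states and lies in $\subMIO$, and when $G$ is additionally diagonal the same computation gives $\Delta(\cE(\ketbra{k}{l}))=m^{-1}G_{lk}\1_m=0$ for $k\neq l$, placing $\cE$ in $\DIO_{\mathrm{sub}}$. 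Finally $\cE(\rho)$ has success probability $\tr G\rho$ and, by~(\ref{eq:c3}), fidelity $\tr[C\rho]/\tr[G\rho]\ge1-\ve$ with $\Psi_m$, so $\cE$ is a valid distillation protocol attaining the SDP value. Combining the two inequalities yields the stated equalities, and since the constraints are linear matrix inequalities in $(G,C)$ the programs are genuine semidefinite programs.
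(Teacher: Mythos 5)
Your proof is correct, and it arrives at the SDPs by a route that shares its key ingredients with the paper's --- both start from Proposition~\ref{proposition 1} and exploit the twirling invariance of $\Psi_m^\ve$ --- but the execution is genuinely different. The paper works in the Choi picture: since $\cT\circ\cE$ is optimal whenever $\cE$ is, one may assume the Choi matrix has the structured form $J_{\widetilde\cE}=C^T\ox\Psi_m+D^T\ox(\1-\Psi_m)$, and substituting this into the constraints of Proposition~\ref{proposition 1} yields the SDPs after the change of variables $G=C+(m-1)D$; a single WLOG step thus gives both inequalities at once. You instead prove the two inequalities separately: for $P_\cO\le\mathrm{SDP}$ you read off the Heisenberg-picture pair $G=\cE^\dagger(\1)$, $C=\cE^\dagger(\Psi_m)$ from an \emph{arbitrary} feasible map, so no structural reduction is needed in this direction; for $P_\cO\ge\mathrm{SDP}$ you build the measure-and-prepare map $\cE(X)=\tr[CX]\,\Psi_m+\tr[(G-C)X]\,\frac{\1-\Psi_m}{m-1}$ --- which is precisely the map whose Choi matrix is the paper's structured form with $D=(G-C)/(m-1)$ --- and verify directly that \eqref{eq:c1} is exactly what makes it incoherence-preserving and $G=\Delta(G)$ exactly what makes it dephasing-covariant. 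Your version buys explicitness: it exhibits the optimal protocol concretely and makes transparent why no generality is lost in the symmetrization, whereas the paper's version is more compact but leaves the structural claim about $J_{\widetilde\cE}$ and the constraint translation largely to the reader. One shared gloss rather than a gap: your opening claim that the scalar relaxation $\tr[\Psi_m\cE(\rho)]\ge(1-\ve)\tr\cE(\rho)$ has the same value as the exact-output problem of Proposition~\ref{proposition 1} requires, besides twirling, the small depolarization argument of the paper's Remark~1 (twirling alone only produces $p\,\Psi_m^{\ve'}$ with $\ve'\le\ve$); the paper invokes the same remark for the same purpose.
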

For completeness, we give the dual forms and alternative formulations of the SDPs in the Supplemental Material~\cite{Note1}.
These SDPs provide us with efficient ways to numerically calculate the maximal success probability for general triplets $(\rho,m,\ve)$, and allow us to obtain key results on the power of different  operations for probabilistic coherence distillation.

In this respect, let us also consider  the choice of IO or SIO as the free operations. It is known that these two sets of operations have the same power in  pure-state transformations, completely characterized by majorization relations. This yields~\cite{lo_2001,vidal_1999-1,chitambar_2016,zhu_2017-1,du2015conditions}
\begin{align*}
    P_\SSIO(\varphi\!\to\! \Psi_m,0) \!=\! \begin{cases} 0  &  \text{if } \rank\Delta(\varphi) < m,\\
    \begin{displaystyle}\min_{k\in[1,m]} \frac{m}{k} \hspace{-.3em}\sum_{i=m\!-\!k\!+\!1}^d \hspace{-.3em}\varphi_i^2 \end{displaystyle} &\text{otherwise,}\end{cases}\raisetag{2.2\baselineskip}
\end{align*}
where we have assumed without loss of generality that the coefficients of $\varphi$ are non-negative and arranged in non-increasing order.
In a similar way, operations in the class DIO can never increase the diagonal rank of a pure state, while it is known that MIO allow for the rank to increase~\cite{chitambar_2016}, suggesting that MIO is a much stronger class. It is thus surprising that MIO and DIO have exactly the same power in the task of deterministic coherence distillation~\cite{regula_2017-1}, and that the two sets of operations lead to the same asymptotic transformation rates for all states~\cite{chitambar_2017-1}.
In the following, we will instead show crucial differences between MIO and DIO when one goes beyond deterministic transformations,
highlighting the increased capabilities of MIO in probabilistic distillation, as well as establishing limitations on coherence distillation in general.
\begin{theorem}\label{full rank thm}
For any triplet $(\rho,m,0)$ with a full-rank state $\rho$, it holds that $P_{\MIO}(\rho\!\to\! \Psi_m,0) = 0$. For any triplet $(\varphi,m,0)$ with a coherent pure state $\ket{\varphi} = \sum_{i=1}^n \varphi_i\ket{i}$, $\varphi_i \neq 0$, $n \geq 2$, it holds
\begin{align}\label{MIO pure zero error}
P_{\MIO}(\varphi\!\to\! \Psi_m,0) \geq \frac{n^2}{m(\sum_{i=1}^n |\varphi_i|^{-2})} > 0.
\end{align}
\end{theorem}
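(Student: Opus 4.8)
The plan is to run both statements through the MIO semidefinite program of Theorem~\ref{SDP thm} specialized to $\ve=0$, where the fidelity constraint \eqref{eq:c3} collapses to $\tr C\rho\ge\tr G\rho$. Since $C\le G$ and $\rho\ge0$ force $\tr C\rho\le\tr G\rho$, at $\ve=0$ this constraint is saturated, i.e.\ $\tr(G-C)\rho=0$ with $G-C\ge0$. This single observation drives both parts: for a full-rank $\rho$ it forces $G=C$, whereas for a pure $\rho=\proj{\varphi}$ it only demands $(G-C)\ket{\varphi}=0$, leaving enough freedom to distill.

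For the no-go part, assume $\rho>0$. Then $\tr(G-C)\rho=0$ with $G-C\ge0$ and $\rho$ invertible gives $G=C$. Substituting into \eqref{eq:c1} yields $\Delta(G)=m\Delta(G)$, hence $(m-1)\Delta(G)=0$ and, for $m\ge2$, $\Delta(G)=0$. A positive semidefinite operator with vanishing diagonal is zero (its $2\times2$ principal minors force the off-diagonals to vanish), so $G=0$ and the objective $\tr G\rho$ vanishes; thus $P_{\MIO}(\rho\to\Psi_m,0)=0$.

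For the lower bound I will exhibit an explicit feasible point of the MIO program, which already yields a lower bound since the program maximizes $\tr G\rho$. After applying an incoherent diagonal phase unitary to assume $\varphi_i>0$ without loss of generality, set $\ket{w}:=\sum_{i=1}^n\varphi_i^{-1}\ket{i}$, so $\braket{w}{\varphi}=n$ and $\|w\|^2=S:=\sum_i\varphi_i^{-2}$, and let $P_w:=\ket{w}\bra{w}/S$ be the rank-one projector onto $\ket{w}$. I then propose $C=\tfrac1m P_w$, $D=\tfrac{m-1}{m(n-1)}\big(n\,\Delta(P_w)-P_w\big)$, and $G=C+D$. Because $D\ket{\varphi}=0$ (below), the objective evaluates to $\tr G\rho=\langle\varphi|G|\varphi\rangle=\tfrac1m\,|\braket{w}{\varphi}|^2/S=n^2/(mS)$, matching the claimed value, which is strictly positive since every $\varphi_i\ne0$.

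The crux is verifying feasibility of $(G,C)$. Constraint \eqref{eq:c1}: as $\Delta(P_w)$ is diagonal, $\Delta(D)=\tfrac{m-1}{m}\Delta(P_w)$, whence $\Delta(G)=\Delta(P_w)=m\Delta(C)$. Constraint \eqref{eq:c3} at $\ve=0$: the telescoping identity $n\,\Delta(P_w)\ket{\varphi}=P_w\ket{\varphi}=\tfrac nS\ket{w}$ gives $D\ket{\varphi}=0$, so $\tr C\rho=\tr G\rho$. The positivity conditions \eqref{eq:c2} are where the real work lies: $C\ge0$ is clear, $G-C=D\ge0$ reduces to $n\,\Delta(P_w)\ge P_w$, which is exactly the Cauchy--Schwarz inequality $\big|\sum_i\sqrt{p_i}\,v_i\big|^2\le n\sum_i p_i|v_i|^2$ with $p_i:=w_i^2/S$; and finally $G\le\1$ comes for free from $G\ge0$ together with $\tr G=\tfrac1m+\tfrac{m-1}{m}=1$, since a positive operator of unit trace has all eigenvalues in $[0,1]$. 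I expect the main obstacle to be discovering the projector $P_w$ onto the inverse-amplitude vector $\ket{w}$ and the matching correction $D$ with the prescribed diagonal and kernel $\ket{\varphi}$; once these are in hand the verification collapses to the two one-line facts above (Cauchy--Schwarz for $D\ge0$ and $\tr G=1$ for $G\le\1$).
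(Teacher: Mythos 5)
Your proof is correct and takes essentially the same approach as the paper's: the full-rank no-go argument is word-for-word the same (saturation of $\tr C\rho \geq \tr G\rho$ forces $G=C$, hence $\Delta(G)=m\Delta(C)$ gives $G=C=0$), and your feasible point is exactly the paper's ansatz built from the inverse-amplitude state $\ket{w}/\sqrt{S}$ with correction proportional to $n\Delta(P_w)-P_w$, just fixed at the normalization $c=1/m$ rather than the paper's optimal $c=\lVert\cdot\rVert_\infty^{-1}$ (which yields a slightly stronger intermediate bound before relaxing to $c\geq 1/m$ via the triangle inequality). Your verification of $G\leq\1$ via $\tr G=1$ together with $G\geq 0$ is a small, clean shortcut over the paper's operator-norm computation, but it does not change the substance of the argument.
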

This result establishes a no-go theorem for coherence distillation, showing that no class of free operations preserving incoherent states can allow to distill any perfect coherence from a full-rank state, even probabilistically. Note that any generic density matrix has full rank, and so does $\Psi_m^\ve$ for any $\ve > 0$.
Thus $ |P_{\MIO}(\Psi_m^\ve\!\to\!\Psi_m,0) - P_{\MIO}(\Psi_m\!\to\!\Psi_m,0)| = 1$,
even though $\Psi_m^\ve$ can be arbitrarily close to $\Psi_m$, implying that the maximal success probability is \textit{not continuous} with respect to the input state.
The physical implications of this result are that any noise typically resulting in full-rank states will lead in practice to an irretrievable loss of resources. For example, in a scenario where the coherent state $\Psi_m$ is stored in a quantum memory exposed to depolarizing noise, it is impossible to recover it perfectly via free operations, even probabilistically.


However, for any {\em pure} coherent state, Theorem~\ref{full rank thm} shows that it is always possible to probabilistically distill a maximally coherent state of arbitrary dimension via MIO. In the Supplemental Material~\cite{Note1} we establish a tighter bound for the probability of distillation under MIO, which in particular gives $P_{\text{MIO}}{(\Psi_{n} \!\to\! \Psi_m, 0)} \geq \frac{n-1}{m-1}$ when $m>n$.
Observe that instead $P_{\text{DIO}}(\Psi_{n} \!\to\! \Psi_m, 0)=0$ for $m>n$. This tells us that, as the dimension $n$ increases, there are $n$-dimensional density matrices $\rho_{n}$ such that $P_{\text{MIO}}(\rho_{n}\!\to\! \Psi_{n+1}, 0)\rightarrow 1$ while $P_{\text{DIO}}(\rho_{n}\!\to\! \Psi_{n+1}, 0) = 0$ for all $n$, i.e., $P_{\text{MIO}}$ and $P_{\text{DIO}}$ can exhibit an arbitrarily large gap. This shows that in the probabilistic distillation scenario, MIO can be much more powerful than DIO in general, in a stark contrast with the case of deterministic coherence distillation~\cite{regula_2017-1} (see Table~\ref{tablediagram}).

\begin{table}[ht]
{\footnotesize{
\begin{tabular}{llll}
\hline \hline
$\begin{array}{l}\mbox{Deterministic}\\ \mbox{distillation \cite{regula_2017-1}} \end{array}$ & $\begin{array}{ll} \mbox{Pure states} & \mbox{General states} \\
                             \mbox{MIO $=$ DIO $=$ IO $=$ SIO} & \mbox{MIO $=$ DIO}\end{array}$ \\ \hline
$\begin{array}{l}\mbox{Probabilistic}\\ \mbox{distillation [$\star$]} \end{array}$ & $\begin{array}{ll} \mbox{Pure states} & \mbox{Full-rank states} \\
                             \mbox{MIO $>$ DIO $=$ IO $=$ SIO} & \mbox{MIO $=$ DIO $=$ IO $=$ SIO $= \varnothing$}\end{array}$ \\
\hline \hline
\end{tabular}}}
\caption{\label{tablediagram}Comparison of the operational power of different sets of free operations for deterministic \cite{regula_2017-1} versus probabilistic [$\star$] (this paper) distillation of quantum coherence. $\varnothing$ denotes the empty set.}
\end{table}

The relation between the capabilities of different operations is made precise by the following result, characterizing the fundamental task of distilling coherence from pure input states.
\begin{theorem}\label{thm:dio sio}
For any pure state $\varphi$ and any $m$, we have
  $P_\DIO(\varphi \!\to\! \Psi_m, 0) = P_\SSIO(\varphi \!\to\! \Psi_m, 0)$.
\end{theorem}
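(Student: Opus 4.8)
The plan is to establish the two inequalities separately. Since $\SIO\subsetneq\DIO$, any feasible (S)IO protocol is in particular a DIO protocol, so the bound $P_\SSIO(\varphi\!\to\!\Psi_m,0)\le P_\DIO(\varphi\!\to\!\Psi_m,0)$ is automatic and the entire content lies in proving the reverse inequality $P_\DIO\le P_\SSIO$. For this I would work with the DIO semidefinite program of Theorem~\ref{SDP thm} at $\ve=0$ for a pure input $\rho=\ketbra\varphi\varphi$, and show that its optimal value never exceeds the displayed (S)IO formula. Throughout I assume, without loss of generality, that the amplitudes $\varphi_i\ge0$ are arranged in non-increasing order, and I write $p=\tr G\varphi=\sum_i g_i\varphi_i^2$ for the objective.

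First I would simplify the feasible set. The DIO constraint $G=\Delta(G)$ forces $G=\sum_i g_i\ketbra ii$ to be diagonal with $g_i\in[0,1]$, whence the equality $\Delta(G)=m\Delta(C)$ reduces to $c_{ii}=g_i/m$. The key simplification comes from the fidelity constraint: at $\ve=0$ it reads $\tr C\varphi\ge\tr G\varphi$, while $C\le G$ gives the opposite inequality, so $\bra\varphi(G-C)\ket\varphi=0$; as $G-C\ge0$ this upgrades to the operator identity $C\ket\varphi=G\ket\varphi$. Thus every feasible point satisfies $0\le C\le G=\Delta(G)\le\1$, $c_{ii}=g_i/m$, and $C\ket\varphi=G\ket\varphi$.

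Next I would renormalise $C$ on its support $S=\{i:g_i>0\}$. Since $C\ge0$, a vanishing diagonal entry kills the whole row and column, so $C$ is supported on $S$ and $V:=G_S^{-1/2}CG_S^{-1/2}$, with $G_S$ the restriction of $G$ to $S$, is well defined there, obeying $0\le V\le\1$ (from $C\le G$) and having constant diagonal $V_{ii}=c_{ii}/g_i=1/m$. Setting $\ket{\tilde\varphi}=G_S^{1/2}\ket\varphi$, the identity $C\ket\varphi=G\ket\varphi$ becomes the eigenvalue equation $V\ket{\tilde\varphi}=\ket{\tilde\varphi}$, while $\|\tilde\varphi\|^2=\sum_i g_i\varphi_i^2=p$. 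Normalising $\ket{\hat\varphi}=\ket{\tilde\varphi}/\sqrt p$ and decomposing $\ket i=\hat\varphi_i\ket{\hat\varphi}+\ket{i_\perp}$ with $\ket{i_\perp}\perp\ket{\hat\varphi}$, the eigenequation together with $V\ge0$ gives $V_{ii}=\hat\varphi_i^2+\bra{i_\perp}V\ket{i_\perp}\ge\hat\varphi_i^2$. Since $\hat\varphi_i^2=g_i\varphi_i^2/p$, this yields the pointwise bound $g_i\varphi_i^2\le p/m$ for every $i$. Extracting this bound is the crux of the proof, and the step I expect to be the main obstacle: it is precisely here that the positivity $C\ge0$ must be used at full strength, since the weaker entrywise estimate $|C_{ij}|\le\sqrt{c_{ii}c_{jj}}$ turns out to overcount the achievable probability by permitting coherences that the constraint $C\le G$ actually forbids.

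Finally I would close the argument by self-consistency. Putting $x_i=g_i\varphi_i^2$, the constraints $g_i\le1$ and the pointwise bound give $0\le x_i\le\min(\varphi_i^2,p/m)$ with $p=\sum_i x_i$, so $p\le f(p):=\sum_i\min(\varphi_i^2,p/m)$. The function $f$ is concave and non-decreasing with $f(0)=0$, so $f(p)-p$ is concave and its nonnegativity set is a bounded interval $[0,p^\star]$, where $p^\star$ is the largest solution of $f(p)=p$; evaluating $f$ at its breakpoints gives $p^\star=\min_{0\le t\le m-1}\frac{m}{m-t}\sum_{i>t}\varphi_i^2$. Hence $p\le p^\star$, and re-indexing by $k=m-t$ identifies $p^\star$ with $\min_{k\in[1,m]}\frac{m}{k}\sum_{i=m-k+1}^d\varphi_i^2=P_\SSIO(\varphi\!\to\!\Psi_m,0)$. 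The degenerate case $\rank\Delta(\varphi)<m$ is subsumed automatically, since $\sum_{i\in S}\hat\varphi_i^2=1$ combined with $\hat\varphi_i^2\le1/m$ forces $|S|\ge m$, so $p=0$ whenever $\varphi$ occupies fewer than $m$ levels. Together with the trivial inclusion bound this establishes $P_\DIO=P_\SSIO$.
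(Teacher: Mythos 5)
Your proof is correct, but it takes a genuinely different route from the paper's. The paper proves the nontrivial inequality $P_\DIO \leq P_\SSIO$ by weak duality: for every $0<\ve<k/m$ it constructs an explicit feasible point $(X,P,\lambda)$ of the \emph{dual} DIO program, built from a block ansatz $\varphi_A,\varphi_B$ around the optimal index $k$, verifies feasibility via generalized Schur complements and the Schur product theorem, obtains $P_\DIO(\varphi\!\to\!\Psi_m,\ve)\leq \mu_1(\ve)\Tr\varphi_B$, and then lets $\ve\to0$ (using monotonicity in $\ve$) to recover $\frac{m}{k}\Tr\varphi_B$. You instead analyze the \emph{primal} program directly at $\ve=0$: the zero-error constraint combined with $C\leq G$ forces $(G-C)\ket{\varphi}=0$, the rescaled operator $V=G_S^{-1/2}CG_S^{-1/2}$ then has flat diagonal $1/m$ and unit eigenvector $\hat\varphi$, which yields the pointwise bound $g_i\varphi_i^2\leq p/m$, and the Vidal-type formula follows by counting. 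Both arguments are sound; each step of yours checks out (in particular $V_{ii}\geq\hat\varphi_i^2$ follows from the eigenvalue equation killing the cross terms plus $V\geq0$, and the degenerate case $\rank\Delta(\varphi)<m$ is correctly absorbed). Your approach is more elementary and self-contained --- no duality, no Schur-complement machinery, no limiting argument --- and it gives structural insight into what optimal primal solutions must look like. The paper's approach buys something you don't get: explicit dual certificates valid for a whole range of positive $\ve$, i.e., quantitative upper bounds on $P_\DIO(\varphi\!\to\!\Psi_m,\ve)$ beyond the zero-error statement, and a template reused elsewhere in the paper (e.g., Lemma S2). One small remark: your closing fixed-point discussion is slightly heavier than needed, and the identification of $p^\star$ with the breakpoint formula is asserted rather than verified; you can bypass it entirely, since for each $k\in[1,m]$ the split
\begin{equation*}
p=\sum_{i\leq m-k}x_i+\sum_{i>m-k}x_i\;\leq\;(m-k)\,\frac{p}{m}+\sum_{i>m-k}\varphi_i^2
\end{equation*}
immediately gives $p\leq\frac{m}{k}\sum_{i=m-k+1}^{d}\varphi_i^2$, and taking the minimum over $k$ finishes the proof without any analysis of $f$.
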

We can therefore see that DIO does not provide any operational advantage over SIO and IO in pure-state probabilistic coherence distillation, despite being a strictly larger class than SIO~\cite{chitambar_2016,bu_2017-1}. Putting together the results of Theorems \ref{full rank thm} and \ref{thm:dio sio}, we have shown that a large gap in the operational capabilities of operations in the one-shot resource theory of quantum coherence exists between MIO and DIO, but not between DIO and SIO/IO --- this can be compared with the case of deterministic distillation, where all sets of operations $\cO \in \{\SIO,\IO,\DIO,\MIO\}$ allow for the same achievable rate of distillation from pure states~\cite{regula_2017-1}, see Table~\ref{tablediagram}.

In the task of distilling maximally coherent qubit states ${\Psi_2}$, we can extend the above result and obtain analytically the maximal success probability
for arbitrary infidelity $\ve$. In this particular case, MIO provides no advantage over DIO.
\begin{proposition}\label{pure to psi 2}
For $\cO\in \{\text{DIO}, \text{MIO}\}$ and any pure state $\varphi$ with $\varphi_1 \geq ... \geq \varphi_n > 0$, it holds that
\begin{equation*}\label{DIO pure anal}
  P_{\cO}(\varphi\!\to\!\Psi_2,\ve) \!=\! \left\{ \!\begin{array}{ll} 1 & \text{if $\ve\!\geq\! \ve_0(\varphi_1)$,} \\  2(1-\varphi_1^2) \left(\frac{\sqrt{1-\ve}+\sqrt{\ve}}{1-2\ve}\right)^2 & \text{otherwise.} \end{array} \right.
\end{equation*}
\end{proposition}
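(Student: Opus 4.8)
The plan is to prove the formula by sandwiching $P_{\cO}(\varphi\!\to\!\Psi_2,\ve)$ between a lower bound from an explicit construction and a matching upper bound from duality, working entirely inside the SDPs of Theorem~\ref{SDP thm}. Since the DIO program is exactly the MIO program with the extra constraint that $G$ be diagonal, one has $P_{\DIO}\le P_{\MIO}$ for free, so it suffices to (i) exhibit a DIO-feasible point whose objective equals the claimed value, bounding $P_{\DIO}$ (hence $P_{\MIO}$) from below, and (ii) show that no MIO-feasible point can do better. I take the coefficients real and nonincreasing; the regime $\varphi_1^2\le 1/2$ already yields probability $1$ at $\ve=0$ (consistent with the SIO value $\min\{2(1-\varphi_1^2),1\}$ and with $\ve_0=0$), so the substantive case is $\varphi_1^2>1/2$.

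For the lower bound I single out the dominant coordinate. Writing $\ket{\varphi}=\varphi_1\ket{1}+t\,\ket{e}$ with $t:=\sqrt{1-\varphi_1^2}$, $\ket{e}:=t^{-1}\sum_{i\ge 2}\varphi_i\ket{i}$ the normalized tail, and $P_{\ge 2}:=\sum_{i\ge 2}\ketbra{i}{i}$, I use the diagonal ansatz $G=a\,\ketbra{1}{1}+b\,P_{\ge 2}$ together with $C=\tfrac a2\ketbra{1}{1}+\tfrac b2 P_{\ge 2}+\gamma\big(\ket{1}\bra{e}+\ket{e}\bra{1}\big)$, which is manifestly DIO-admissible: $\Delta(G)=G$ and $\Delta(C)=\tfrac12 G$ give Eq.~(\ref{eq:c1}), while $0\le C\le G\le\1$ reduces to $a,b\le 1$ and $\gamma\le\tfrac12\sqrt{ab}$. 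Taking $\gamma=\tfrac12\sqrt{ab}$ and substituting $x:=\sqrt a\,\varphi_1$, $y:=\sqrt b\,t$, the objective becomes $\tr G\rho=x^2+y^2$ and the fidelity constraint (\ref{eq:c3}) becomes $xy\ge(\tfrac12-\ve)(x^2+y^2)$, maximized over $0\le x\le\varphi_1$, $0\le y\le t$. This confines the ratio $y/x$ to the band $[r_-,r_+]$ with $r_\pm=\frac{\sqrt{1-\ve}\pm\sqrt\ve}{\sqrt{1-\ve}\mp\sqrt\ve}$, and maximizing the convex objective over this region places the optimum at the vertex $(x,y)=(r_+t,\,t)$ whenever $r_+t\le\varphi_1$; a short simplification (using $r_+^2+1=2(\sqrt{1-\ve}-\sqrt\ve)^{-2}$) then gives $x^2+y^2=2(1-\varphi_1^2)\big(\tfrac{\sqrt{1-\ve}+\sqrt\ve}{1-2\ve}\big)^2$, exactly the stated value. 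When instead $r_+t>\varphi_1$ one takes $(x,y)=(\varphi_1,t)$ with value $1$, and the boundary $r_+t=\varphi_1$ defines $\ve_0(\varphi_1)$, so the two branches meet continuously.

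The harder direction is the upper bound, which must rule out any gain from a non-diagonal $G$. The plan is to close the gap by weak duality through Proposition~\ref{proposition 1}: I will construct a Hermitian witness $W$ on the qubit output with $\tr[W\,\cE(\varphi)]\le 1$ for every $\cE\in\subMIO$ and $\tr[W\,\Psi_2^\ve]$ equal to the reciprocal of the stated value, its form dictated by complementary slackness against the primal optimum above; feasibility of $W$ then forces $P_{\MIO}(\varphi\!\to\!\Psi_2,\ve)$ below the stated value. The main obstacle is precisely this feasibility check. Symmetrization does not collapse the MIO program to the two-block form, because the tail coordinates carry unequal weights $\varphi_i$ and are not permutation-symmetric, so one cannot a priori discard coherences of $G$ living outside $\mathrm{span}\{\ket{1},\ket{e}\}$; verifying that the candidate $W$ is dominated on the entire cone of subnormalized MIO outputs of $\varphi$ is the crux. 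Once this is established, weak duality gives the matching upper bound, and combined with the construction it yields $P_{\DIO}=P_{\MIO}=$ the stated value, in particular showing MIO offers no advantage for $m=2$.
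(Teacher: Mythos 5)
Your achievability half is correct, and it is essentially the paper's own primal argument: the paper plugs $G=\1-x\ket{0}\bra{0}$, $C=\tfrac12 G+y\sum_{i\ge2}\varphi_1\varphi_i(\ket{1}\bra{i}+\ket{i}\bra{1})$ into the DIO SDP of Theorem~\ref{SDP thm}; your two-parameter diagonal ansatz contains this choice (your optimizer lands at $b=1$, i.e.\ exactly the paper's one-parameter family), and your reduction to maximizing $x^2+y^2$ over the cone $xy\ge(\tfrac12-\ve)(x^2+y^2)$ inside the box, with the threshold $\ve_0$ emerging as the value of $\ve$ at which the corner $(\varphi_1,t)$ enters the cone, reproduces the stated expression correctly.

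The converse half, however, is a genuine gap, not a stylistic omission. The entire content of the proposition beyond the SIO lower bound is that no MIO protocol --- in particular none exploiting a non-diagonal $G$ --- can exceed $2(1-\varphi_1^2)\bigl(\frac{\sqrt{1-\ve}+\sqrt{\ve}}{1-2\ve}\bigr)^2$, and your proposal stops exactly there: the witness $W$ is never written down, its domination over the output cone $\cS_\varphi$ of Proposition~\ref{proposition 1} is never verified, and your conclusion is explicitly conditional (``once this is established\ldots''). You diagnose the obstacle correctly (twirling cannot collapse the MIO program to a two-block form because the tail amplitudes $\varphi_2,\dots,\varphi_n$ are unequal), but you provide no mechanism to overcome it. The paper overcomes it concretely: it passes to the Lagrange dual of the MIO SDP (the $m=2$ instance of Eq.~\eqref{eq:d_MIO}) and exhibits an explicit feasible point --- dual multiplier $\lambda=\frac{(\sqrt{1-\ve}+\sqrt{\ve})^2}{\sqrt{1-\ve}\sqrt{\ve}\,(1-2\ve)}$, $Z=\frac{2}{1-2\ve}\ket{0}\bra{0}\varphi\ket{0}\bra{0}$, objective variable $Y=2\bigl(\frac{\sqrt{1-\ve}+\sqrt{\ve}}{1-2\ve}\bigr)^2(\1-\ket{0}\bra{0})\varphi(\1-\ket{0}\bra{0})$ (with $\ket{0}$ the dominant coordinate), and the remaining operator fixed so that the first dual constraint is saturated. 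Feasibility is then a finite computation, since every operator involved acts nontrivially only on the span of $\ket{0}$ and the normalized tail of $\varphi$ up to a diagonal part, and $\tr Y$ equals precisely the claimed value, so weak duality closes the sandwich and simultaneously shows $P_\MIO=P_\DIO$ here. Until you either carry out such a dual construction or verify your candidate $W$ against all of $\cS_\varphi$, your argument establishes only the lower bound $P_{\cO}(\varphi\!\to\!\Psi_2,\ve)\ge 2(1-\varphi_1^2)\bigl(\frac{\sqrt{1-\ve}+\sqrt{\ve}}{1-2\ve}\bigr)^2$, not the equality asserted in the proposition.
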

Here the function $\ve_0$, defined as $\ve_0(\varphi_1)=0$ if  $\varphi_1\leq\frac{1}{\sqrt2}$ and $\ve_0(\varphi_1)= \frac12-\varphi_1\sqrt{1-\varphi_1^2}$ otherwise, can be related to the $m$-distillation norm~\cite{regula_2017-1} characterizing the fidelity of deterministic distillation.
Using this analytical result, we can give a concrete example to show that the probabilistic distillation framework can outperform the deterministic one. Suppose we need to distill a maximally coherent qubit state $\Psi_2$ from the input state $\ket{\varphi} = (3\ket{0}+\ket{1})/\sqrt{10}$ with acceptable fidelity at least $0.9$. The input state becomes useless in the deterministic scenario, since the maximal fidelity achievable via deterministic protocols is given by $0.8$. However, probabilistic operations allow us to achieve the required distillation fidelity with $1/2$ success probability, demonstrating an explicit operational advantage of probabilistic distillation. In another scenario, if the acceptable fidelity is $0.8$, we can gain higher distillation fidelity by compromising some success probability even though deterministic protocols are sufficient to accomplish the task. Such a setting may be dubbed ``gambling with coherence'', adapting terminology from \cite{bennett_1996-1,lo_2001}.


\vspace{0.1cm}
\textbf{\textit{Relation between distillation fidelity and probability.}}---For any given input state $\rho$ and target state dimension $m$, the maximal success probability is only dependent on the transformation fidelity. The higher the fidelity we require, the lower the probability that we will succeed. Intuitively, one would expect the success probability to decrease smoothly as the fidelity increases; however, we 
will now show that the success probability can vanish discontinuously beyond some fidelity threshold. This phenomenon is analogous to the strong converse theorem in channel coding theory~\cite{wolfowitz_1978,ogawa_1999,winter_1999}, which says that the coding success probability goes to zero if the coding rate exceeds the capacity of the channel.
While this phenomenon cannot occur in distillation from pure input states under MIO due to Theorem~\ref{full rank thm}, in the following result we completely characterize this ``sudden death'' property  for pure input states under DIO.
\begin{proposition}\label{nontradeoff prop}
	For any pure state $\ket{\varphi} = \sum_{i=1}^n \varphi_i \ket{i}$ with nonzero coefficients $\varphi_i$, it holds that
	\begin{equation*}\begin{aligned}
	P_{\DIO}(\varphi\!\to\! \Psi_m,\ve) \!\ \begin{cases} >0 \;\ \ \text{ if } n \geq m \text{ or if } n < m \text{ and } \ve \geq 1 \!-\! \frac{n}{m}, \\  = 0 \;\ \ \text{ if } n < m \text{ and } \ve < 1 \!-\! \frac{n}{m}.\end{cases}
	\end{aligned}\end{equation*}
\end{proposition}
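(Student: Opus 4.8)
The plan is to work directly with the DIO semidefinite program of Theorem~\ref{SDP thm}, specialized to a pure input $\varphi$. Since the DIO constraint forces $G=\Delta(G)$ to be diagonal, write $g_i:=G_{ii}$; then Eq.~(\ref{eq:c1}) collapses to $\Delta(C)=G/m$, i.e.\ $C_{ii}=g_i/m$, while the objective becomes $\tr G\varphi=\bra\varphi G\ket\varphi=\sum_i g_i|\varphi_i|^2$. Because $\varphi$ is supported on the $n$ indices with $\varphi_i\neq 0$, entries of $G,C$ outside this support enter neither the objective nor the fidelity constraint~(\ref{eq:c3}), so I restrict to the $n$-dimensional support. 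Deciding whether $P_{\DIO}>0$ is then a pure positivity question, and since $(G,C)\mapsto(sG,sC)$ preserves all constraints except $G\le\1$ (which can always be met by taking $s$ small), I may drop $G\le\1$. Thus $P_{\DIO}(\varphi\!\to\!\Psi_m,\ve)>0$ iff there exist a diagonal $G\ge 0$ and a positive semidefinite $C$ with $0\le C\le G$, $C_{ii}=g_i/m$, $\bra\varphi G\ket\varphi>0$, and $\bra\varphi C\ket\varphi\ge(1-\ve)\bra\varphi G\ket\varphi$. Writing $R^\ast:=\sup \bra\varphi C\ket\varphi/\bra\varphi G\ket\varphi$ over such pairs, positivity holds precisely when $\ve\ge 1-R^\ast$.

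The core of the argument is then to establish $R^\ast=n/m$ for $n<m$, which immediately produces the threshold $\ve=1-n/m$. For the lower bound I exhibit an explicit optimizer: take the rank-one $C=\ketbra{u}{u}$ with $u_i=e^{\,i\arg\varphi_i}/|\varphi_i|$ and set $g_i=m|u_i|^2=m/|\varphi_i|^2$. For a rank-one $C$ and diagonal $G\succ0$ the constraint $C\le G$ reduces to $\bra u G^{-1}\ket u\le1$, and here $\bra u G^{-1}\ket u=\sum_i 1/m=n/m\le1$ since $n<m$; a direct computation gives $\bra\varphi C\ket\varphi=n^2$ and $\bra\varphi G\ket\varphi=mn$, hence $R=n/m$. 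Note the optimal $g_i\propto|\varphi_i|^{-2}$ is state-dependent, which is why a naive uniform choice only attains the weaker ratio $\tfrac1m(\sum_i|\varphi_i|)^2$.

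The main obstacle is the matching upper bound $R^\ast\le n/m$, since the constraints $0\le C\le G$ alone give only the trivial ratio $1$; the small prescribed diagonal $C_{ii}=g_i/m$ must be exploited. I resolve this by a congruence. On the support of $G$ (indices with $g_i=0$ force the corresponding row and column of the positive semidefinite $C$ to vanish and can be discarded) set $\widetilde C:=G^{-1/2}CG^{-1/2}$, so that $0\le\widetilde C\le\1$ and, crucially, $\widetilde C_{ii}=C_{ii}/g_i=1/m$. Hence $\tr\widetilde C$ equals the number of indices with $g_i>0$ divided by $m$, which is at most $n/m$, and since $\widetilde C\ge0$ its largest eigenvalue obeys $\lambda_{\max}(\widetilde C)\le\tr\widetilde C\le n/m$, i.e.\ $\widetilde C\le(n/m)\1$. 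With $\ket\psi:=G^{1/2}\ket\varphi$ this yields $\bra\varphi C\ket\varphi=\bra\psi\widetilde C\ket\psi\le(n/m)\|\psi\|^2=(n/m)\bra\varphi G\ket\varphi$, proving $R^\ast\le n/m$. Combining the two bounds, and noting the supremum is attained (so there is a genuine optimizer with $\bra\varphi G\ket\varphi>0$ realizing $R=n/m$), shows $P_{\DIO}>0$ exactly when $n/m\ge1-\ve$, and $P_{\DIO}=0$ when $\ve<1-n/m$.

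Finally, the case $n\ge m$ needs no SDP analysis: by Theorem~\ref{thm:dio sio}, $P_{\DIO}(\varphi\!\to\!\Psi_m,0)=P_\SSIO(\varphi\!\to\!\Psi_m,0)$, and since $\rank\Delta(\varphi)=n\ge m$ the $\SSIO$ formula evaluates to $\min_{k\in[1,m]}\frac{m}{k}\sum_{i=m-k+1}^d\varphi_i^2>0$, because every such partial sum contains at least one nonzero $\varphi_i^2$. As $P_{\DIO}(\varphi\!\to\!\Psi_m,\ve)$ is nondecreasing in $\ve$ (enlarging the tolerance only relaxes constraint~(\ref{eq:c3})), positivity at $\ve=0$ gives positivity for all $\ve\ge0$, completing this branch and hence the proposition.
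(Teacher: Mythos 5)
Your proof is correct, but it takes a genuinely different route from the paper's. The paper never analyzes the DIO SDP for a general pure input: it reduces both branches of the $n<m$ case to the maximally coherent input $\Psi_n$ --- using that $\Psi_n\!\to\!\varphi$ is a deterministic (S)IO transformation by majorization, so $P_{\DIO}(\varphi\!\to\!\Psi_m,\ve)\leq P_{\DIO}(\Psi_n\!\to\!\Psi_m,\ve)$, and conversely that a perfect $\Psi_n\!\to\!\Psi_m^\ve$ transformation can be appended after first distilling $\Psi_n$ from $\varphi$ --- and then settles the case $\Psi_n\!\to\!\Psi_m$ by a separate lemma (Lemma~\ref{psin to psim}) exhibiting explicit feasible points: $G=\1$, $C=\frac{n}{m}\Psi_n$ in the primal when $\ve\geq 1-\frac{n}{m}$, and a dual feasible point of value $0$ when $\ve<1-\frac{n}{m}$. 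You instead stay in the primal SDP of Theorem~\ref{SDP thm} for arbitrary $\varphi$, note that positivity is scale-invariant so $G\leq\1$ may be dropped, and identify the threshold as the exact value $R^\ast=n/m$ of a ratio optimization: your rank-one ansatz $C=\ketbra{u}{u}$ with inverse weights $g_i=m/|\varphi_i|^2$ (closely parallel to the $\widetilde\varphi$ ansatz the paper uses for the MIO bound in Theorem S1) gives $R\geq n/m$, while the congruence $\widetilde C=G^{-1/2}CG^{-1/2}$, which has constant diagonal $1/m$, gives $\lambda_{\max}(\widetilde C)\leq\Tr\widetilde C\leq n/m$ and hence $R\leq n/m$. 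The trade-offs: the paper's reduction also delivers the stronger statement $P_{\DIO}(\Psi_n\!\to\!\Psi_m,\ve)=1$ at and above threshold, but it needs SDP duality and the majorization facts; your argument is self-contained at the primal level, replaces the dual feasible point by a one-line matrix inequality for the no-go direction, treats all pure states symmetrically without reduction to $\Psi_n$, and as a byproduct gives a quantitative lower bound on the success probability at threshold (rescale your optimizer by $1/\|G\|_{\infty}$). Both proofs handle the $n\geq m$ branch essentially identically, via the (S)IO formula together with monotonicity in $\ve$.
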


In the particular case of the transformation $\Psi_n\!\to\! \Psi_m$ with $n \leq m$, the probability equals 1 as long as $\ve \geq 1 - \frac{n}{m}$. The result shows in particular that, if the output dimension is larger than the input dimension, any trade-off between the maximal success probability and the transformation fidelity will always be truncated at the fidelity threshold $\frac{n}{m}$. Specifically, at the point $\ve = 1 - \frac{n}{m}$, demanding a slightly higher fidelity will make the probabilistic distillation impossible, as shown in Fig.~\ref{nontradeoff ex}.

\begin{figure}[t]
\centering
\includegraphics[width=5cm]{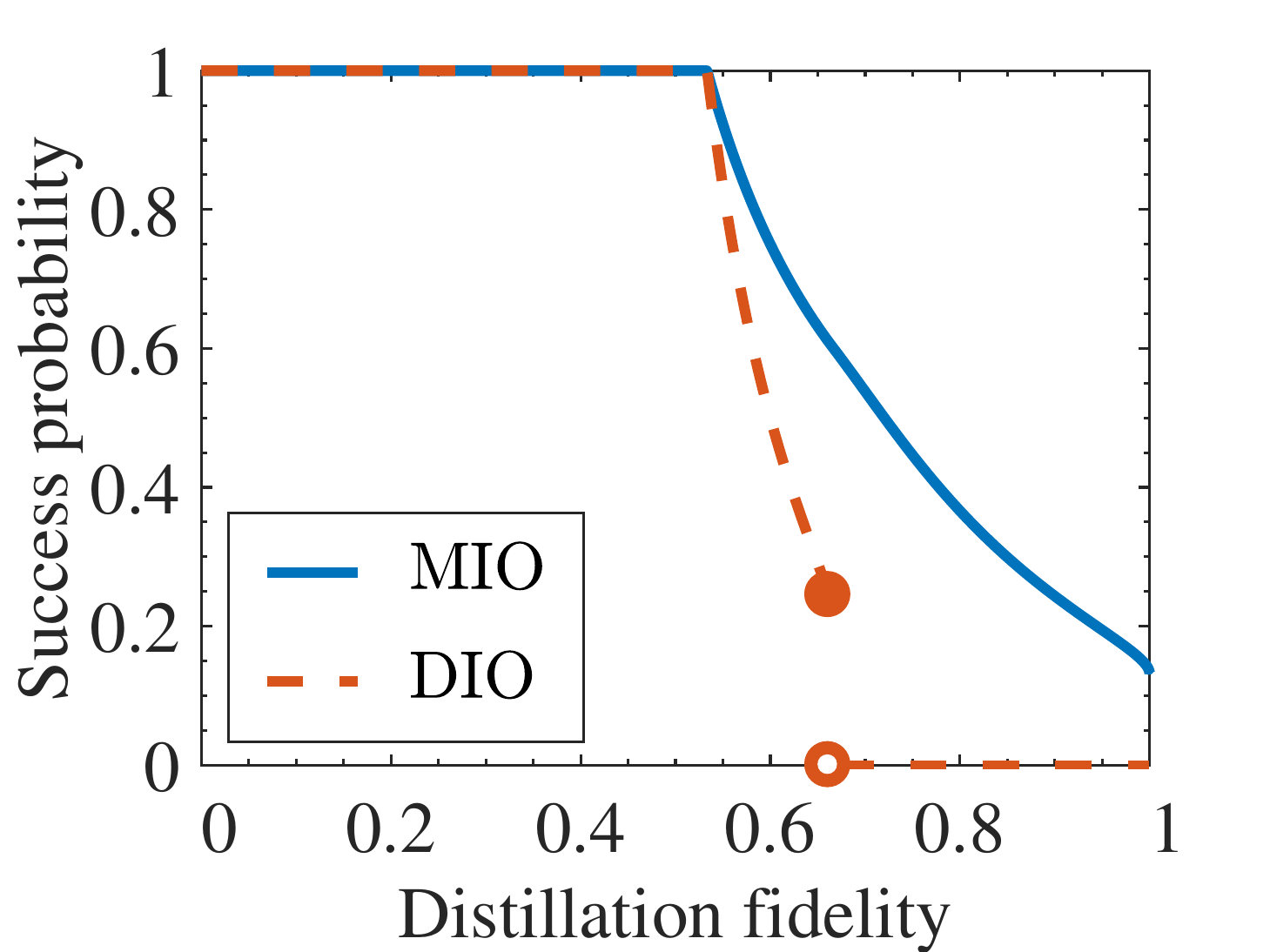}
\caption{Interplay between the fidelity $F=1-\ve$ and the success probability $p$ of coherence distillation for the example transformation $(\ket{0}+3\ket{1})/\sqrt{10} \!\to\! \Psi_3$. A discontinuity occurs at $F = 2/3$.}
\label{nontradeoff ex}
\end{figure}

\vspace{0.1cm}
\textbf{\textit{Probabilistic distillation with catalytic assistance.}}--- A more general coherence distillation setting is to consider the scenario with catalytic assistance~\cite{jonathan_1999}, where the input to the protocol consists of the resource state $\rho$ together with another state $\gamma$ (catalyst). As suggested by its name, we need to reproduce $\gamma$ untouched in the output regardless of whether the distillation process succeeds or not. In~\cite{bu_2016}, the authors studied catalytic coherence transformations without enforcing the preservation of the catalyst when the transformation fails --- it is then not surprising that catalytic assistance improves the success probability, since we take the risk to sacrifice our catalyst. However, we can show that using catalysts can enhance probabilistic distillation even when we require them to be reproduced regardless of the outcome.

Formally, we denote the catalysis-assisted maximal success probability of coherence distillation under the operation class $\cO$ as $P_{\cO}\big(\rho \xrightarrow{\gamma} \Psi_m, \ve\big)$, which is given by the maximal value of $p$ subject to the constraints
\begin{equation}\label{cata definition}
\begin{aligned}
\hspace{-0.2cm}\Pi(\rho\ox \gamma) =\  &\big(p \ket{0}\bra{0}  \ox \sigma + (1-p) \ket{1}\bra{1}\ox \o\big) \ox \gamma,\\
&F(\sigma,\Psi_m) \geq 1-\ve,\ \ \ \Pi
\in \cO.
\end{aligned}
\end{equation}
Since we can always choose not to interact with the catalyst, it is clear that $P_{\cO}\big(\rho \xrightarrow{\gamma} \Psi_m,\ve\big) \geq P_{\cO}(\rho\!\to\! \Psi_m, \ve)$.

Taking as an example the two-qubit state
$\rho=\frac{1}{2}(v_1+v_2)$ with
$\ket {v_1} = \frac{1}{2}(\ket {00} -\ket {01} -\ket {10}+\ket {11})$ and $\ket{v_2} = \frac{1}{5\sqrt{2}}(2\ket {00} +6\ket {01} -3\ket {10}+\ket {11})$,
it turns out that
the catalytic assistance of $\gamma = \Psi_2$ can enhance the success probability (by at least $12\%$) of distilling one coherent bit via DIO reliably ($\ve\le 0.01$) \cite{Note1}.
This example shows that the maximally coherent state can be used as a catalyst, manifesting a difference with the case of deterministic state transformation, where no transformation can be catalyzed by a maximally coherent state~\cite{jonathan_1999,du2015conditions}. We further note that if we allow a small perturbation of the catalyst to be returned in the protocol, one may obtain an even higher success probability as shown in the Supplemental Material \cite{Note1}  --- such a setting has been considered e.g.~in \cite{brandao_2015-2} for the resource theory of thermodynamics.

\vspace{0.1cm}
\textbf{\textit{Conclusions.}}---
We have developed a general framework of probabilistic coherence distillation. We interpreted the fundamental relations between the distillation fidelity and the maximal success probability via a gauge function construction, and showed that the maximal success probability under MIO and DIO can be efficiently computed via semidefinite programming.
We proved that distilling perfect coherence from any full-rank state is {\em impossible} even probabilistically, while any pure coherent state can always be perfectly distilled with MIO into a maximally coherent state of arbitrary dimension with a non-zero probability, highlighting an operational advantage of MIO over other sets of operations, in contrast with the deterministic case. On the other hand, we found that DIO provides no operational advantage over SIO in pure-state distillation, with the maximal achievable distillation probability being equal under the two classes of operations. We provided an analytical characterization of distillation with pure input states and in particular described the distillation of qubit maximally coherent states under MIO and DIO.
We further explored novel phenomena of coherence distillation such as the breakdown of the trade-off between  maximal success probability and fidelity under a certain threshold as well as the catalyst-assisted enhancement by maximally coherent states.

Our work establishes fundamental limitations to the processing of quantum coherence in realistic settings, opening new perspectives for its investigation and exploitation as a resource in quantum information processing and quantum technology \cite{hillery2016coherence,coles2016numerical,streltsov_2016,anshu2018quantum,diaz2018,lostaglio2015description,Frowis2011,streltsov_2017}. It would be of interest to analyze as well the task of probabilistic coherence dilution under different free operations, complementing the deterministic case studied in~\cite{zhao_2018}. Another interesting perspective for future work may be to apply the framework of probabilistic distillation developed here to the study of other important resource theories, such as those of asymmetry, magic, and thermodynamics \cite{chitambar_2018-1}.

\begin{acknowledgments}
\textbf{\textit{Acknowledgments.}}---
We would like to thank Gilad Gour, Filip Rozp\k{e}dek, Alexander Streltsov, Marco Tomamichel, and Yunlong Xiao for fruitful discussions. We acknowledge financial support from the European Research Council (ERC) under the Starting Grant GQCOP (Grant No.~637352).
\end{acknowledgments}

\bibliographystyle{apsrev4-1}
\bibliography{main}


\clearpage

\onecolumngrid
\begin{center}
\vspace*{.5\baselineskip}
{\textbf{\large Supplemental Material: \\[3pt] Probabilistic distillation of quantum coherence}}\\[1pt] \quad \\
\end{center}

\renewcommand{\theequation}{S\arabic{equation}}
\renewcommand{\thetheorem}{S\arabic{theorem}}
\setcounter{equation}{0}
\setcounter{figure}{0}
\setcounter{table}{0}
\setcounter{section}{0}

This supplemental material provides more detailed analysis and proofs of the results in the main text. We may reiterate some of the steps in the main text to make the supplemental material more explicit and self-contained.


\section{Proof of Proposition \ref{proposition 1}}

For any triplet $(\rho,m,\ve)$, the definition of the maximal success probability of coherence distillation is defined by 
\begin{subequations}\label{definition of pmax}
  \begin{align}
   P_{\cO}(\rho\!\to\! \Psi_m, \ve) = \max & \  p\\
   \text{s.t.} & \ \Pi_{A\to LB}(\rho) = p \ket{0}\bra{0}_L \ox \sigma + (1-p) \ket{1}\bra{1}_L\ox \o,\\
  &\ F(\sigma,\Psi_m) \geq 1-\ve,\ \Pi_{A\to LB} \in \cO,\ 0 \leq p \leq 1,
 \end{align}
\end{subequations}
where $F(\rho,\sigma):=\|\sqrt{\rho}\sqrt{\sigma}\|_1^2$ is the fidelity and $\|\cdot\|_1$ is the trace norm.
\begin{remark}
  The optimal solution in the optimization~\eqref{definition of pmax} can always achieve the equality in the condition $F(\rho,\Psi_m) \geq 1-\ve$. That is $P_{\cO}(\rho\!\to\! \Psi_m, \ve) = \widetilde P_{\cO}(\rho\!\to\! \Psi_m, \ve)$ with 
  \begin{subequations}\label{definition of pmax equality}
  \begin{align}
   \widetilde P_{\cO}(\rho\!\to\! \Psi_m, \ve) = \max & \  p\\
   \text{\rm s.t.} & \ \Pi_{A\to LB}(\rho) = p \ket{0}\bra{0}_L \ox \sigma + (1-p) \ket{1}\bra{1}_L\ox \o,\\
  &\ F(\sigma,\Psi_m) = 1-\ve,\ \Pi_{A\to LB} \in \cO,\ 0 \leq p \leq 1.
 \end{align}
\end{subequations}
It is clear that $P_{\cO}(\rho\!\to\! \Psi_m, \ve) \geq \widetilde P_{\cO}(\rho\!\to\! \Psi_m, \ve)$ since $P_{\cO}(\rho\!\to\! \Psi_m, \ve)$ is maximizing over a larger feasible set. However, for any optimal solution in \eqref{definition of pmax} such that $F(\sigma,\Psi_m) \geq 1-\ve$, we can further apply a depolarizing noise (free operation) to make the fidelity between $\rho$ and $\Psi_m$ decrease exactly to $1-\ve$, which gives a feasible solution for \eqref{definition of pmax equality}. Thus it also holds $P_{\cO}(\rho\!\to\! \Psi_m, \ve) \leq \widetilde P_{\cO}(\rho\!\to\! \Psi_m, \ve)$. In the following, we will equivalently use both optimizations \eqref{definition of pmax} and \eqref{definition of pmax equality}.
\end{remark}
\begin{remark}
  Denote the twirling operation $\cT(\rho)=\frac1{d!} \sum_{i=1}^{d!} P_i \rho P_i$, where $P_i$ are all the permutations on the input system and $d$ is the input dimension. If the distillation fails, we can perform a free operation $\cT\circ \Delta \in \SIO$  to make the unwanted state $\o$ completely mixed without changing the success probability. Thus, without loss of generality, we can take $\o=\1/m$. On the other hand, if the distillation process succeeds and we obtain an output state $\sigma$ such that $F(\sigma, \Psi_m) = 1-\ve$, we can further perform the free operation $\cT$ which gives a new output state $\cT(\sigma)$ always in the form of $a \Psi_m + b (\1-\Psi_m)/(m-1)$. Since $F(\sigma, \Psi_m) = 1-\ve$, we have $\cT(\sigma) = \Psi_m^\ve$ with $\Psi_m^\ve := (1-\ve)\Psi_m + \ve \frac{\1-\Psi_m}{m-1}$. Thus, without loss of generality, we can take $\sigma = \Psi_m^\ve$. Finally, the maximal success probability of coherence distillation can be given by
  \begin{subequations}\label{definition of pmax symmetry}
  \begin{align}
   P_{\cO}(\rho\!\to\! \Psi_m, \ve) = \max & \  p\\
   \text{\rm s.t.} & \ \Pi_{A\to LB}(\rho) = p \ket{0}\bra{0}_L \ox \Psi_m^\ve + (1-p) \ket{1}\bra{1}_L\ox \1/m,\\
  &\  \Pi_{A\to LB} \in \cO,\ 0 \leq p \leq 1,
 \end{align}
\end{subequations}
\end{remark}

\begingroup
\renewcommand{\theproposition}{\ref{proposition 1}}
\begin{proposition}
  For any triplet $(\rho,m,\ve)$ and operation class $\cO$, the maximal success probability $P_{\cO}(\rho\!\to\! \Psi_m,\ve)$ is given by
  \begin{align}\label{prop 1 equation}
\max \lsetr p\in \mathbb{R_+} \barr \cE(\rho) = p \cdot\Psi_m^\ve,\ \cE \in \subO \rsetr.
  \end{align}
As a consequence, it holds that $P_{\cO}\left(\rho\!\to\! \Psi_m,\ve\right)^{-1} = \min \lsetr t\in \mathbb{R_+} \barr \Psi_m^\ve \in t \cdot \cS_\rho \rsetr$ where $\cS_\rho \coloneqq \lsetr \cE(\rho) \barr \cE \in \subO \rsetr$ is the set of all the output operators of $\rho$ under the operation class $\subO$.
\end{proposition}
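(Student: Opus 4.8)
The plan is to start from the reduced form of the definition in~\eqref{definition of pmax symmetry}, where the Remarks have already fixed the success output to $\Psi_m^\ve$ and the failure output to $\1/m$, and to prove~\eqref{prop 1 equation} by establishing two inclusions between flagged free operations $\Pi_{A\to LB}\in\cO$ and subnormalized free operations $\cE\in\subO$ with $\cE(\rho)=p\,\Psi_m^\ve$. Once the first identity is in place, the gauge-function reformulation follows from an elementary substitution $t=1/p$ together with the convexity and star-shapedness of $\cS_\rho$.

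For the inclusion $\cO\Rightarrow\subO$, given $\Pi_{A\to LB}\in\cO$ attaining probability $p$ in~\eqref{definition of pmax symmetry}, I define the success branch $\cE(\cdot):=\bra{0}_L\,\Pi_{A\to LB}(\cdot)\,\ket{0}_L$. This map is completely positive, it satisfies $\cE(\rho)=p\,\Psi_m^\ve$ by construction, and it is trace-nonincreasing because $\Pi_{A\to LB}$ is trace-preserving while $\ketbra{0}{0}_L\ox\1\leq\1$. The point requiring care is that $\cE$ inherits the constraint defining the class. For MIO this is immediate: if $\tau\in\cI$ then $\Pi_{A\to LB}(\tau)$ is diagonal on $LB$, so its $(0,0)$ block $\cE(\tau)$ is diagonal, hence incoherent. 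For DIO it follows because conditioning on the diagonal flag state $\ket{0}_L$ commutes with the dephasing map; writing $\Delta_{LB}=\Delta_L\ox\Delta_B$ one checks $\cE\circ\Delta_A=\bra{0}_L\,\Delta_{LB}\Pi_{A\to LB}(\cdot)\,\ket{0}_L=\Delta_B\circ\cE$, so $[\Delta,\cE]=0$. Thus $\cE\in\subO$ and $p$ is feasible for~\eqref{prop 1 equation}.

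For the converse $\subO\Rightarrow\cO$, given $\cE\in\subO$ with $\cE(\rho)=p\,\Psi_m^\ve$, I append a complementary failure branch. A convenient universal choice is the measure-and-prepare map $\cE'(\cdot):=\tr\!\big[(\1-\cE^\dagger(\1))\,\cdot\,\big]\,\1/m$, which is completely positive since $\cE^\dagger(\1)\leq\1$ (trace-nonincreasingness) and which renders $\cE+\cE'$ trace-preserving. Then $\Pi_{A\to LB}(\cdot):=\ketbra{0}{0}_L\ox\cE(\cdot)+\ketbra{1}{1}_L\ox\cE'(\cdot)$ reproduces the output of~\eqref{definition of pmax symmetry}, because $\cE'(\rho)=(1-\tr\cE(\rho))\,\1/m=(1-p)\,\1/m$ fixes the failure state to $\o=\1/m$. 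It remains to confirm $\Pi_{A\to LB}\in\cO$: the failure branch $\cE'$ outputs a scalar multiple of the incoherent state $\1/m$, so it is free for MIO, and for DIO one uses that $[\Delta,\cE]=0$ forces $\cE^\dagger(\1)$ to be diagonal, whence $\1-\cE^\dagger(\1)$ is diagonal and $\cE'$ commutes with $\Delta$; attaching the orthogonal diagonal flags preserves freeness in both cases. Hence $p$ is feasible for $P_{\cO}$, and the two inclusions give~\eqref{prop 1 equation}.

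Finally, $\cS_\rho=\{\cE(\rho)\mid\cE\in\subO\}$ is convex, contains $0$ (the zero map lies in $\subO$), and is star-shaped about the origin, since $s\cE\in\subO$ for every $s\in[0,1]$ gives $s\,\cS_\rho\subseteq\cS_\rho$. Therefore $\{t\in\mathbb{R}_+\mid\Psi_m^\ve\in t\cdot\cS_\rho\}$ is upward closed, and the condition $p\,\Psi_m^\ve\in\cS_\rho$ is equivalent to $\Psi_m^\ve\in t\cdot\cS_\rho$ with $t=1/p$; maximizing $p$ thus coincides with minimizing $t=1/p$, which is exactly the gauge functional of $\cS_\rho$ at $\Psi_m^\ve$, yielding $P_{\cO}(\rho\!\to\!\Psi_m,\ve)^{-1}=\min\{t\in\mathbb{R}_+\mid\Psi_m^\ve\in t\cdot\cS_\rho\}$. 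The main obstacle, and the step I expect to demand the most care, is precisely the class-specific verification that the success branch of a free operation is again a subnormalized free operation and that every subnormalized free operation admits a free complementary branch; this closure-under-flagging property is what legitimizes the subnormalized reformulation and must be checked separately for each of MIO, DIO, IO and SIO.
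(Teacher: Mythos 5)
Your proposal is correct and takes essentially the same route as the paper's own proof: decompose the flagged operation $\Pi_{A\to LB}$ into a success branch and a failure branch, observe that freeness of $\Pi$ is equivalent to both branches lying in $\subO$ with trace-preserving sum, and fix the failure branch to the measure-and-prepare map onto $\1/m$ so that only the success branch matters. The only difference is one of detail rather than of method: where the paper asserts the branch equivalence with ``we can check,'' you carry out that verification explicitly for MIO and DIO (and rightly flag that IO/SIO need an analogous Kraus-level check), and you spell out the star-shapedness argument behind the gauge-function reformulation.
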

\begin{proof}
For any quantum operation $\Pi_{A\to LB}$ such that $\Pi_{A\to LB}(\rho) = \ket{0}\bra{0}_L \ox \cE_0(\rho) + \ket{1}\bra{1}_L \ox \cE_1(\rho)$ where $\cE_0$ and $\cE_1$ are two subnormalized operations, we can check that $\Pi_{A\to LB} \in \cO$ if and only if $\cE_0,\cE_1 \in \subO$ and $\cE_0 + \cE_1$ is trace preserving. Thus finding the optimal solution in the optimization \eqref{definition of pmax symmetry}
is equivalent to find the optimal subnormalized operations $\cE_0$ and $\cE_1$ such that $\cE_0(\rho) = p\cdot\Psi_m^\ve$, $\cE_1(\rho) = (1-p) \1/m$ and $\cE_0 + \cE_1$ trace-preserving. Since we can always take
$\cE_1(\rho) = (\tr \rho-\tr\cE_0(\rho)) \1/m$ without compromising the success probability, the maximal success probability of coherence distillation is only dependent on $\cE_0$, and the result follows.
\end{proof}
\endgroup

\section{Proof of Theorem \ref{SDP thm}}

\begingroup
\renewcommand{\theproposition}{\ref{SDP thm}}
\begin{theorem}
  For any triplet $(\rho,m,\ve)$, the maximal success probability of distillation under MIO and DIO are given by
  \begin{subequations}\label{MIO supp}
  \begin{align}
 \hspace*{-23pt}P_{\MIO}(\rho\!\to\! \Psi_m,\ve) = \underset{G,C}{\text{\rm maximize}} &\quad \tr G\rho\\
  \text{\rm subject to} &\quad \ \Delta(G) = m \Delta(C), \\
  &\quad \ 0 \leq C \leq G \leq \1,\\
  &\quad \tr C\rho \geq (1-\ve) \tr G \rho,
  \end{align}
  \end{subequations}
  and
\begin{subequations}\label{DIO supp}
  \begin{align}
 \hspace*{-23pt}P_{\DIO}(\rho\!\to\! \Psi_m,\ve) = \underset{G,C}{\text{\rm maximize}} &\quad \tr G\rho\\
  \text{\rm subject to} &\quad \ \Delta(G) = m \Delta(C), \\
  &\quad \ 0 \leq C \leq G \leq \1,\\
  &\quad \tr C\rho \geq (1-\ve) \tr G \rho,\\
  & \quad \ G = \Delta(G).
  \end{align}
  \end{subequations}
\end{theorem}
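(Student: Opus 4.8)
The plan is to pass to the \Choi representation and thereby trade the optimisation over operations $\cE$ for an optimisation over two input-space operators, showing that the constraints on $G$ and $C$ are exactly the necessary and sufficient conditions for a feasible Choi matrix to exist. By Proposition~\ref{proposition 1} it is enough to maximise $\tr\cE(\rho)$ over subnormalized $\cE$ in the relevant class subject to $\langle\Psi_m|\cE(\rho)|\Psi_m\rangle\geq(1-\ve)\tr\cE(\rho)$. Writing $J$ for the Choi matrix of $\cE$, complete positivity reads $J\geq0$, trace non-increase reads $\tr_B J\leq\1$, and the defining MIO property (every diagonal state is sent to a diagonal state) becomes the linear condition $(\Delta\ox\id)(J)=(\Delta\ox\Delta)(J)$, i.e.\ the $B$-off-diagonal blocks of $J$ are diagonal-free; for DIO one adds the dephasing-covariance $\Delta\circ\cE=\cE\circ\Delta$ as a further linear constraint.

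For the necessity (``$\geq$'') direction I would, given any feasible $\cE$, set $G:=(\tr_B J)^{T}$ and $C:=\big(\tr_B[(\1\ox\Psi_m)J]\big)^{T}$, the transpose taken in the incoherent basis so that it commutes with $\Delta$. Then $0\leq C\leq G\leq\1$ is immediate from $J\geq0$, $0\leq\Psi_m\leq\1$ and $\tr_B J\leq\1$; the objective $\tr G\rho=\tr\cE(\rho)$ is the success probability, while $\tr C\rho=\langle\Psi_m|\cE(\rho)|\Psi_m\rangle$ converts the fidelity bound into the constraint $\tr C\rho\geq(1-\ve)\tr G\rho$. The one substantive computation is that $\langle k|\Psi_m|l\rangle=1/m$ combined with the MIO condition yields $\Delta(G)=m\Delta(C)$, and that covariance additionally forces $G=\Delta(G)$ in the DIO case.

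The crux is the converse (``$\leq$''), where every feasible $(G,C)$ must be realised by an actual subnormalized map. The key step is to guess the Choi matrix, for which I would propose
\begin{equation*}
J \;=\; C^{T}\ox\Psi_m \;+\; \frac{G^{T}-C^{T}}{m-1}\ox(\1-\Psi_m).
\end{equation*}
Since $\Psi_m$ and $\1-\Psi_m$ are orthogonal projectors, $J\geq0$ collapses to exactly $C\geq0$ and $G\geq C$; the identity $\Delta(G)=m\Delta(C)$ makes the off-diagonal blocks of $J$ diagonal-free, so the MIO condition holds; and $\tr_B J=G^{T}\leq\1$ gives trace non-increase. A short computation shows that this map sends $\rho$ to $(\tr C\rho)\,\Psi_m+\tfrac{\tr G\rho-\tr C\rho}{m-1}(\1-\Psi_m)$, which already lies in the permutation-symmetric family; hence its success probability is $\tr G\rho$ and its fidelity is $\tr C\rho/\tr G\rho\geq1-\ve$, with an optional depolarization bringing the output exactly to $\Psi_m^\ve$ as in Proposition~\ref{proposition 1}. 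Matching the two bounds proves the MIO identity.

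For DIO I would feed the same $J$ through the covariance test: a direct evaluation gives $\Delta(\cE(\ketbra{i}{j}))=\tfrac1m(G^{T})_{ij}\1$, so the extra SDP constraint $G=\Delta(G)$ is precisely what renders $\cE$ dephasing-covariant, and the construction goes through verbatim. I expect the main obstacle to be this converse: hitting on the ansatz above and recognising that $0\leq C\leq G\leq\1$ together with $\Delta(G)=m\Delta(C)$ are exactly the conditions needed for positivity, incoherence preservation (resp.\ covariance) and normalisation; the remaining steps are routine bookkeeping in the \Choi picture.
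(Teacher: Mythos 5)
Your proof is correct, and at its core it is the same argument as the paper's: both pass to the Choi representation, and both rest on the identical two-block ansatz $J = C^{T}\ox\Psi_m + D^{T}\ox(\1-\Psi_m)$ with $D=(G-C)/(m-1)$, whose positivity, trace-nonincrease, and incoherence/covariance conditions translate exactly into $0\leq C\leq G\leq\1$, $\Delta(G)=m\Delta(C)$, and (for DIO) $G=\Delta(G)$. The one genuine point of departure is the direction showing that every feasible map yields a feasible SDP pair: the paper replaces an optimal $\cE$ by its twirled version $\cT\circ\cE$, noting that invariance of $\Psi_m^\ve$ keeps it optimal and that the permutation twirl forces its Choi matrix into the two-block form, so both inclusions drop out of a single symmetrization; you instead extract $G=(\tr_B J)^{T}$ and $C=\big(\tr_B[(\1\ox\Psi_m)J]\big)^{T}$ from an arbitrary feasible Choi matrix and verify the constraints by hand. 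Your route costs a little more bookkeeping but is more self-contained---it never needs the observation that twirling preserves both optimality and membership in the free class---whereas the paper's symmetrization is shorter and makes it conceptually transparent why the two-block form is without loss of generality. Two points you should keep explicit in a final write-up: first, your opening reformulation (an inequality fidelity constraint in place of the exact output $p\,\Psi_m^\ve$ of Proposition~1) does require the twirl-plus-depolarization argument of the paper's Remarks~1 and~2, which you correctly invoke; second, your derivation of $\Delta(G)=m\Delta(C)$ in the extraction direction silently uses that the output system is exactly $m$-dimensional, so that $\tr\cE(\ketbra{i}{i})$ equals the sum of its first $m$ diagonal entries---this is the paper's convention, but it is worth stating, since for a larger output space that identity would fail.
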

\begin{proof}
We show the proof for MIO first. Denote $J_{\cN}$ as the \Choi matrix of operation $\cN$. Due to the \Choi isomorphism, we can write the optimization \eqref{prop 1 equation} in terms of \Choi matrix,
\begin{subequations}\label{MIO Choi matrix form}
  \begin{align}
    P_{\MIO}(\rho \to \Psi_m,\ve) = \max &\ p\\
    \text{s.t.} & \tr_A J_{\cE} (\rho^T\ox \1_B) = p \cdot \Psi_m^\ve,\\
    &\ J_{\cE} \geq 0, \tr_B J_{\cE} \leq \1_A,\\
    & \tr_A J_{\cE} (\ket{i}\bra{i}^T \ox \1_B) = \Delta(\tr_A J_{\cE} (\ket{i}\bra{i}^T \ox \1_B)),\, \forall i.
  \end{align}
\end{subequations}
For any optimal subnormalized quantum operation $\cE$ in optimization~\eqref{prop 1 equation}, the operation $\widetilde \cE = \cT\circ \cE$ is also optimal since $\Psi_m^\ve$ is invariant under the twirling operation $\cT$. Thus $J_{\widetilde \cE}$ admits the structure of $J_{\widetilde \cE} = C^T \ox \Psi_m + D^T \ox (\1 - \Psi_m)$ for some operators $C$ and $D$. Taking this form of \Choi matrix into the conditions of the optimization~\eqref{MIO Choi matrix form}, we have 
\begin{subequations}
  \begin{align}
    P_{\MIO}(\rho \to \Psi_m,\ve) = \max &\ p\\
    \text{s.t.} & \tr C\rho = p(1-\ve),\, \tr (m-1)D \rho = p\ve,\\
    &\ C \geq 0,\, D\geq 0,\, C+(m-1)D \leq \1,\\
    &\ \Delta(C) = \Delta(D).
  \end{align}
\end{subequations}
Denoting $G := C+(m-1)D$ and eliminating variable $D$, we will obtain the result \eqref{MIO supp}. Note that the last condition in \eqref{MIO supp} can be taken as equality due to the argument in Remark 1. Following the similar steps, we can obtain the optimization for DIO.
\end{proof}
\endgroup

\begin{remark}
  From the conditions $G = \Delta(G) = m \Delta(C)$, we can eliminate the variable $G$ in the optimization \eqref{DIO supp} and obtain
    \begin{subequations}\label{DIO no G}
  \begin{align}
 P_{\DIO}(\rho\!\to\! \Psi_m,\ve) = \underset{C}{\text{\rm maximize}} &\quad \tr m \Delta(C) \rho\\
  \text{\rm subject to} &\quad \ 0 \leq C \leq m \Delta(C) \leq \1, \\ &\quad \ \tr C\rho \geq m (1-\ve) \tr \Delta(C) \rho.
  \end{align}
  \end{subequations}
\end{remark}

\begin{remark}
We will make repeated use of the semidefinite programs given in Theorem~\ref{SDP thm} and their Lagrange duals. We report the dual forms below. The dual of \eqref{MIO supp} is given by
\begin{equation}\label{eq:d_MIO}
\begin{aligned}
 P_\MIO(\rho\!\to\! \Psi_m, \ve) =\;\, & \underset{X,Y,Z,\lambda}{\text{\rm minimize}} & & \Tr X\\
& \text{\rm subject to} & & \left[ 1 - \lambda (1-\ve) \right] \rho + Y - X + \Delta(Z) \leq 0,\\
&&& \lambda \rho - Y - m \Delta(Z) \leq 0,\\
&&& X,Y \geq 0,\\
&&& \lambda \in \RR_+,\\
\end{aligned}
\end{equation}
The dual of \eqref{DIO no G} is given by
\begin{equation}\label{eq:d_DIO}
\begin{aligned}
 P_\DIO(\rho\!\to\! \Psi_m, \ve) =\;\, & \underset{X,Y,\lambda}{\text{\rm minimize}} & & \Tr X\\
& \text{\rm subject to} & & m \Delta(\rho) + m\Delta(Y) - Y - m\Delta(X) + \lambda \rho - m (1-\ve) \lambda \Delta(\rho) \leq 0,\\
&&& X,Y \geq 0,\\
&&& \lambda \in \RR_+.
\end{aligned}
\end{equation}
To see that strong duality holds, it suffices to note the existence of strictly feasible solutions: in~\eqref{eq:d_MIO}, take $\lambda = \frac{1}{1-\ve}, X = 3\lambda \id, Y = 2\lambda \id, Z = 0$, and in~\eqref{eq:d_DIO}, take $\lambda = \frac{1}{1-\ve}, X = 3\lambda \id, Y = \id$.
\end{remark}

\begin{remark}
In the case $\ve=0$, the primal problems can alternatively be expressed as
\begin{equation}
\begin{aligned}
 P_{\MIO}(\rho\!\to\! \Psi_m, 0) =\;\, & \underset{G,C}{\text{\rm maximize}} & & \Tr G \rho\\
& \text{\rm subject to} & & C \geq 0,\\
&&& G \leq \1 ,\\
&&& \Delta(G) = m \Delta(C),\\
&&& G - C \in \big\{ A \ \big| \ \supp(A) \subseteq \ker(\rho) \big\} \cap \HH_+.
\end{aligned}
\end{equation}
\begin{equation}
\begin{aligned}
P_\DIO(\rho\!\to\! \Psi_m, 0) =\;\, & \underset{C}{\text{\rm maximize}} & & \Tr m \Delta(C) \rho\\
& \text{\rm subject to} & & C \geq 0,\\
&&& m\Delta(C) \leq \1,\\
&&& m\Delta(C) - C \in \big\{ A \ \big| \ \supp(A) \subseteq \ker(\rho) \big\} \cap \HH_+,
\end{aligned}
\end{equation}
where we have used $\HH_+$ to denote the set of positive semidefinite matrices. This gives the duals
\begin{equation}
\begin{aligned}
 P_\MIO(\rho\!\to\! \Psi_m, 0) =\;\, & \underset{W,Z}{\text{\rm minimize}} & & \Tr ( W + Z ) + 1\\
& \text{\rm subject to} & & \rho + W + \Delta(Z) \geq 0,\\
&&& W + m\Delta(Z) \geq 0,\\
&&& W \in \big\{ A \ \big| \ \supp(A) \subseteq \supp(\rho) \big\} + \HH_+,
\end{aligned}
\end{equation}
\begin{equation}
\begin{aligned}
 P_{\DIO} (\rho\!\to\! \Psi_m, 0) =\;\, & \underset{X,W}{\text{\rm minimize}} & & \Tr X\\
& \text{\rm subject to} & & m \Delta(\rho) + W - m\Delta(W) - m \Delta(X) \leq 0,\\
&&& X \geq 0,\\
&&& W \in \big\{ A \ \big| \ \supp(A) \subseteq \supp(\rho) \big\} + \HH_+.
\end{aligned}
\end{equation}
\end{remark}

\section{Proof of Theorem \ref{full rank thm}}

To prove Theorem~\ref{full rank thm} in the main text, we will establish a stronger version of the result as follows.

\begingroup
\renewcommand{\theproposition}{S1}
\begin{theorem}
For any triplet $(\rho,m,0)$ with full-rank state $\rho$, it holds that $P_{\MIO}(\rho\!\to\! \Psi_m,0) = 0$. For any triplet $(\varphi,m,0)$ with coherent pure state $\ket{\varphi} = \sum_{i=1}^n \varphi_i\ket{i}$ and $\varphi_i \neq 0$, $n\geq 2$, it holds that
\begin{align}\label{MIO lower bound}
P_{\MIO}(\varphi \to \Psi_m,0) \geq \frac{n^2}{\sum_{i=1}^n |\varphi_i|^{-2}} \left\| \frac{n-m}{n-1} \widetilde{\varphi} + \frac{n(m-1)}{n-1} \Delta(\widetilde{\varphi}) \right\|_{\infty}^{-1} \geq \frac{n^2}{m(\sum_{i=1}^n |\varphi_i|^{-2})} > 0
\end{align}
where \begin{align}
\ket{\widetilde \varphi} := \frac{1}{\sqrt{s}}\sum_{i=1}^n \frac{\varphi_i}{|\varphi_i|^2}\ket{i} \quad \text{\rm with} \quad s =  \sum_{j=1}^n |\varphi_j|^{-2}.
\end{align}
\end{theorem}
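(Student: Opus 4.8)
The plan is to handle both parts through the MIO semidefinite program of Theorem~\ref{SDP thm} specialized to $\ve=0$, where the objective is $\tr G\rho$ subject to $\Delta(G)=m\Delta(C)$, $0\le C\le G\le\1$, and $\tr C\rho\ge\tr G\rho$. It is also convenient to recall from the proof of that theorem that $G=C+(m-1)D$ with $C,D\ge0$ and $\Delta(C)=\Delta(D)$, and that at $\ve=0$ the output equals $p\Psi_m$ exactly iff $\bra{\varphi}D\ket{\varphi}=0$, i.e.\ $D\ket{\varphi}=0$ for pure input.

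For the full-rank claim, first note that $C\le G$ forces $\tr C\rho\le\tr G\rho$, so combined with the feasibility constraint $\tr C\rho\ge\tr G\rho$ we get $\tr[(G-C)\rho]=0$. Since $G-C\ge0$ and $\rho>0$ is invertible, rewriting $\tr[(G-C)\rho]=\tr[\rho^{1/2}(G-C)\rho^{1/2}]$ shows the integrand is positive semidefinite with zero trace, hence $G=C$. Then $\Delta(G)=m\Delta(C)$ becomes $\Delta(G)=m\Delta(G)$, so $\Delta(G)=0$ for $m\ge2$; a positive semidefinite operator with vanishing diagonal is zero, hence $G=0$ and the objective vanishes, giving $P_{\MIO}(\rho\!\to\!\Psi_m,0)=0$.

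For the pure-state bound I would exhibit an explicit feasible point. The key ansatz is $C=\lambda\proj{\widetilde\varphi}$ together with $D=\tfrac{\lambda}{n-1}\big(n\Delta(\proj{\widetilde\varphi})-\proj{\widetilde\varphi}\big)$. Using $\proj{\widetilde\varphi}\ket{\varphi}=\tfrac{n}{\sqrt s}\ket{\widetilde\varphi}$ and $\Delta(\proj{\widetilde\varphi})\ket{\varphi}=\tfrac{1}{\sqrt s}\ket{\widetilde\varphi}$ gives $D\ket{\varphi}=0$ at once, while $\Delta(D)=\lambda\Delta(\proj{\widetilde\varphi})=\Delta(C)$ enforces the MIO diagonal condition. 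A short Cauchy--Schwarz estimate, applied to $a_i=\psi_i/\varphi_i$ in $\bra{\psi}\big(n\Delta(\proj{\widetilde\varphi})-\proj{\widetilde\varphi}\big)\ket{\psi}=\tfrac1s\big(n\sum_i|a_i|^2-|\sum_i a_i|^2\big)\ge0$, shows $n\Delta(\proj{\widetilde\varphi})-\proj{\widetilde\varphi}\ge0$, hence $D\ge0$ and $C\le G$. Substituting yields $G=C+(m-1)D=\lambda A$ with $A=\tfrac{n-m}{n-1}\proj{\widetilde\varphi}+\tfrac{n(m-1)}{n-1}\Delta(\proj{\widetilde\varphi})$, and $A=\lambda^{-1}(C+(m-1)D)\ge0$.

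Feasibility then reduces to $G=\lambda A\le\1$, i.e.\ $\lambda\le\|A\|_\infty^{-1}$, so taking $\lambda=\|A\|_\infty^{-1}$ and evaluating $\tr G\rho=\bra{\varphi}C\ket{\varphi}=\lambda|\braket{\widetilde\varphi}{\varphi}|^2=\lambda n^2/s$ gives the first inequality $P_{\MIO}(\varphi\!\to\!\Psi_m,0)\ge\tfrac{n^2}{s}\|A\|_\infty^{-1}$. For the second inequality it is enough to show $\|A\|_\infty\le m$: since $A\ge0$, its largest eigenvalue is at most its trace, and $\tr A=\tfrac{(n-m)+n(m-1)}{n-1}=m$ using $\tr\proj{\widetilde\varphi}=\tr\Delta(\proj{\widetilde\varphi})=1$; positivity of $n^2/(ms)$ is immediate. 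The hard part is guessing the correct ansatz for $C$ and $D$, in particular the combination $n\Delta(\proj{\widetilde\varphi})-\proj{\widetilde\varphi}$ that simultaneously annihilates $\ket{\varphi}$, matches the diagonal of $C$, and remains positive semidefinite; once this is identified, the feasibility checks and the clean trace bound $\|A\|_\infty\le m$ are routine.
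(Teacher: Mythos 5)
Your proof is correct and takes essentially the same route as the paper's: the full-rank part (forcing $\tr[(G-C)\rho]=0\Rightarrow G=C\Rightarrow(m-1)\Delta(G)=0\Rightarrow G=0$) is identical, and your ansatz $C=\lambda\proj{\widetilde\varphi}$, $D=\tfrac{\lambda}{n-1}\bigl(n\Delta(\proj{\widetilde\varphi})-\proj{\widetilde\varphi}\bigr)$ reproduces exactly the paper's feasible pair, since their $G$ equals your $C+(m-1)D$ with $c=\lambda=\|A\|_\infty^{-1}$. The only cosmetic deviation is the last estimate, where you bound $\|A\|_\infty\le\tr A=m$ via positive semidefiniteness while the paper uses the triangle inequality and the fact that $(n\Delta(\widetilde\varphi)-\widetilde\varphi)/(n-1)$ is a density matrix; both are equally quick.
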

\begin{proof}
Recall the SDP
\begin{align}\label{sSDP MIO}
 P_{\MIO}(\rho\!\to\! \Psi_m,\ve) = \max \left\{\tr G\rho \ \big|\  \Delta(G) = m \Delta(C), \ 0 \leq C \leq G \leq \1,
 \tr C\rho = (1-\ve) \tr G \rho\right\}.
\end{align}
For the first argument, we know that $G-C \geq 0$ and $\tr (G - C)\rho = 0$. Since $\rho$ is full-rank, we have $G = C$. Together with $\Delta(G) = m \Delta(C)$, we have $G = C = 0$ and $P_{\MIO}(\rho\!\to\! \Psi_m,0) = 0$.

As for the second argument, let us choose
\begin{align}
C = c \widetilde{\varphi}\, ,\qquad G = c \widetilde{\varphi} + \frac{(m-1)c}{n-1} \left(n\Delta(\widetilde{\varphi}) - \widetilde{\varphi} \right),
\end{align}
where
\begin{align}
c = \left\| \frac{n-m}{n-1} \widetilde{\varphi} + \frac{n(m-1)}{n-1} \Delta(\widetilde{\varphi}) \right\|_{\infty}^{-1} .
\end{align}
We check the constraints in~\eqref{sSDP MIO} one by one. The first condition trivially holds by the construction. The last condition holds since $\<\varphi|n\Delta(\widetilde \varphi) - \widetilde \varphi|\varphi\> = 0$, which implies that $\<\varphi|C|\varphi\> = \<\varphi|G|\varphi\>$. We now move on to the second condition. Clearly $C\geq 0$  and furthermore $G\geq C$ as follows from $\varphi \leq n\Delta(\varphi)$. To show that $G\leq \1$, just observe that
\begin{align}
\|G\|_{\infty} = c \left\| \widetilde{\varphi} + \frac{m-1}{n-1} \left(n\Delta(\widetilde{\varphi}) - \widetilde{\varphi} \right) \right\|_{\infty} = c \left\| \frac{n-m}{n-1} \widetilde{\varphi} + \frac{n(m-1)}{n-1}\Delta(\widetilde{\varphi})\right\|_{\infty} = 1\, .
\end{align}
Hence, $C,G$ as defined above form a valid ansatz for the semidefinite program~\eqref{sSDP MIO} and
\begin{align}
P_{\text{MIO}}(\varphi, m,0) \geq \tr G \varphi = \frac{n^2 c}{s},
\end{align}
which yields the first lower bound in~\eqref{MIO lower bound}. As for the second bound, it suffices to show that $c\geq 1/m$, i.e. that $c^{-1}\leq m$. This can be done thanks to the triangle inequality:
\begin{align}
c^{-1} = \left\| \widetilde{\varphi} + \frac{m-1}{n-1} \left( n\Delta(\widetilde{\varphi}) - \widetilde{\varphi} \right) \right\|_{\infty} \leq \left\| \widetilde{\varphi} \right\|_{\infty} + (m-1) \left\| \frac{n\Delta(\widetilde{\varphi}) - \widetilde{\varphi}}{n-1} \right\|_{\infty} \leq 1 + (m-1) = m\, ,
\end{align}
where we have used the fact that $\frac{n \Delta(\widetilde{\varphi}) - \widetilde{\varphi}}{n-1}$ is a valid density matrix.
\end{proof}
\endgroup


\section{Proof of Theorem \ref{thm:dio sio}}

\begingroup
\renewcommand{\theproposition}{\ref{thm:dio sio}}
\begin{theorem}
For any pure state $\varphi$ and any $m$, it holds that
\begin{equation}\begin{aligned}
  P_\DIO(\varphi \!\to\! \Psi_m,0) = P_\SIO(\varphi \!\to\! \Psi_m,0).
\end{aligned}\end{equation}
\end{theorem}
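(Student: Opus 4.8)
The plan is to prove the two inequalities separately. Because $\SIO \subsetneq \DIO$, any feasible SIO protocol is in particular a feasible DIO protocol, so $P_\SIO(\varphi\!\to\!\Psi_m,0) \le P_\DIO(\varphi\!\to\!\Psi_m,0)$ holds for free. The entire content of the theorem is therefore the reverse bound $P_\DIO \le P_\SIO$: that the strictly larger class DIO affords no advantage for distilling $\Psi_m$ from a pure state.

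First I would specialize the DIO semidefinite program of Theorem~\ref{SDP thm} to a pure input $\rho = \ketbra{\varphi}{\varphi}$ at $\ve = 0$, assuming $\varphi_1\ge\varphi_2\ge\cdots\ge 0$. Feasibility forces $G = \Delta(G)$ diagonal with $G = m\Delta(C)$, so with $g_i := \bra{i}G\ket{i}$ one has $0\le g_i\le 1$ and $\bra{i}C\ket{i} = g_i/m$; moreover $\tr C\rho \ge \tr G\rho$ combined with $C\le G$ gives $\tr(G-C)\rho = 0$, hence $(G-C)\ket{\varphi}=0$ by positivity. The objective to be bounded is $V := \tr G\rho = \sum_i \varphi_i^2\, g_i$.

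The crux is to distill these matrix constraints into one scalar inequality for $V$. Set $\ket{w} = \sqrt{G}\,\ket{\varphi}$, so that $V = \|w\|^2$ and $|w_i|^2 = g_i\varphi_i^2 \le \varphi_i^2$. Working on $\supp(G)$ and conjugating by $\sqrt{G}^{-1}$, the operator $\tilde C := \sqrt{G}^{-1} C \sqrt{G}^{-1}$ satisfies $0\le\tilde C\le\1$ with all diagonal entries equal to $1/m$, while $C\ket{\varphi}=G\ket{\varphi}$ turns into $\tilde C\ket{w}=\ket{w}$, making $\ket{w}$ an eigenvector of $\tilde C$ at eigenvalue $1$. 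Then $\tilde C \ge \ketbra{w}{w}/\|w\|^2$, and comparing diagonal entries gives $|w_i|^2/\|w\|^2 \le 1/m$, i.e.\ $|w_i|^2 \le V/m$. Together with $|w_i|^2\le\varphi_i^2$ this yields the self-consistent inequality $V = \sum_i|w_i|^2 \le \sum_i \min\!\big(\varphi_i^2,\, V/m\big) =: h(V)$.

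Finally I would solve this inequality and match it to the known SIO value. Since $h$ is concave, nondecreasing and $h(0)=0$, the set $\{V\ge 0 : V\le h(V)\}$ equals $[0,V^*]$ with $V^*$ the largest fixed point $h(V^*)=V^*$ (and $V^*=0$ when fewer than $m$ of the $\varphi_i$ are nonzero, matching $\rank\Delta(\varphi)<m$). Hence $P_\DIO \le V^*$. A water-filling evaluation, writing $t=V^*/m$ and $j$ for the number of indices with $\varphi_i^2\ge t$, gives $V^* = \tfrac{m}{m-j}\sum_{i=j+1}^d \varphi_i^2 = \min_{k\in[1,m]} \tfrac{m}{k}\sum_{i=m-k+1}^d \varphi_i^2$, which is exactly $P_\SIO(\varphi\!\to\!\Psi_m,0)$. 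I expect the main obstacle to be the third step: legitimizing the passage to $\tilde C$ on $\supp(G)$, securing the eigenvalue-$1$ bound $\tilde C\ge\ketbra{w}{w}/\|w\|^2$, and checking that the diagonal estimate $|w_i|^2\le V/m$ persists on $\ker(G)$ (where it holds trivially). The concluding identification of $V^*$ with the minimization formula, though the analytic heart of Vidal-type pure-state conversion results, is a routine rearrangement.
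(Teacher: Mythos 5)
Your proposal is correct, but it proves the bound $P_\DIO \le P_\SIO$ by a genuinely different route than the paper. The paper works in the \emph{dual} of the DIO semidefinite program: it fixes $k = \argmin_{1\le j\le m}\frac1j\sum_{i=m-j+1}^d\varphi_i^2$, builds an explicit two-parameter family of dual-feasible triples $(X,P,\lambda)$ out of the blocks $\varphi_A,\varphi_B$ of the pure state, verifies feasibility through generalized Schur complement conditions and the Schur product theorem, and — because the ansatz only works for strictly positive error — obtains the zero-error claim by letting $\ve\to 0$ in the bound $P_\DIO(\varphi\!\to\!\Psi_m,\ve)\le\mu_1(\ve)\Tr\varphi_B$. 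You instead stay in the \emph{primal} at $\ve=0$ and exploit the zero-error constraint head-on: $\tr(G-C)\varphi=0$ with $G-C\ge 0$ forces $(G-C)\ket{\varphi}=0$, which after conjugating by $\sqrt{G}^{-1}$ (legitimate since $G=\Delta(G)$ is diagonal and $\supp(C)\subseteq\supp(G)$) makes $\ket{w}=\sqrt{G}\ket{\varphi}$ an eigenvalue-$1$ eigenvector of $\tilde C\ge 0$; the bound $\tilde C\ge\ketbra{w}{w}/\|w\|^2$ then needs only positivity and the eigenvector property, and comparing diagonals yields the self-consistency inequality $V\le\sum_i\min(\varphi_i^2,V/m)$, whose largest solution is exactly Vidal's formula by the water-filling argument you sketch (both directions of that identification check out: one side follows by truncating the sum at any $j\le m-1$, the other by evaluating $h$ at the candidate value). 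Your route buys elementarity and transparency — no Schur complements, no pseudoinverses of rank-one blocks, no limiting procedure, and a clear structural explanation of why the majorization-type expression emerges — while the paper's dual construction buys extra information: it produces certified upper bounds on $P_\DIO(\varphi\!\to\!\Psi_m,\ve)$ for strictly positive $\ve$, not just at $\ve=0$. Note that your argument is genuinely tied to $\ve=0$, since the key implication $\tr C\varphi\ge\tr G\varphi\Rightarrow(G-C)\ket{\varphi}=0$ breaks down for $\ve>0$; this is a limitation, not a gap, as the theorem only concerns zero error.
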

\begin{proof}
We will assume without loss of generality that the coefficients of $\ket\varphi$ are non-negative and arranged in non-increasing order. Let
\begin{equation}\begin{aligned}
  k \coloneqq \argmin_{1 \leq j \leq m} \, \frac{1}{j} \sum_{i=m-j+1}^d \varphi_i^2
\end{aligned}\end{equation}
such that $P_\SIO(\varphi \!\to\! \Psi_m, 0) = \frac{m}{k} \sum_{i=m-k+1}^d \varphi_i^2$. Note that if $k = m$, then $1 \geq P_\DIO(\varphi \!\to\! \Psi_m , 0) \geq P_\SIO(\varphi \!\to\! \Psi_m, 0) = 1$. In the following, we will therefore assume that $1 \leq k \leq m-1$. Further, let us begin by considering strictly positive $\ve$.

Define
\begin{equation}\begin{aligned}
  \varphi_A &\coloneqq \sum_{i,j=1}^{m-k} \proj{i} \varphi \proj{j}\\
  \varphi_B &\coloneqq \sum_{i,j=m-k+1}^{d} \proj{i} \varphi \proj{j}
\end{aligned}\end{equation}
and notice that $P_\SIO(\varphi \!\to\! \Psi_m, 0) = \frac{m}{k} \Tr \varphi_B$.

Recall that the SDP for the maximum distillation probability under DIO is given by
\begin{subequations}\label{eq:dual_DIO}
\begin{align}
 P_\DIO(\rho\!\to\! \Psi_m, \ve) = \underset{X,P,\lambda}{\text{minimize}} &\quad \Tr X\nonumber\\
\text{subject to} &\quad m \Delta(\rho) + m\Delta(P) - P - m\Delta(X) + \lambda \rho - m (1-\ve) \lambda \Delta(\rho) \leq 0\label{SDP_cond}\\\
&\quad P,X \geq 0\\
&\quad \lambda \in \RR_+.
\end{align}
\end{subequations}
We will take the ansatz
\begin{equation}\begin{aligned}\label{eq:ansatz}
  P &= \mu_1  \varphi_B - \mu_2  \varphi_A + \left[ (1-\ve)\lambda - 1 \right] \varphi\\
  X &= \mu_1 \Delta(\varphi_B)\\
\end{aligned}\end{equation}
for some coefficients $\mu_1, \mu_2 \in \RR_+$. Our aim will now be to show that there exists a suitable choice of $\lambda$, $\mu_1$, and $\mu_2$ such that $(X,P,\lambda)$ is feasible for the SDP~\eqref{eq:dual_DIO}.

To this end, denote by $\Theta$ the all-ones matrix of appropriate size, and notice that the condition~\eqref{SDP_cond} reduces to
\begin{equation}\begin{aligned}
  0 &\geq m \Delta(\varphi) + m\Delta(P) - P - m\Delta(X) + \lambda \varphi - m (1-\ve) \lambda \Delta(\varphi)\\
  &= \mu_2 \left( \varphi_A - m \Delta(\varphi_A) \right) - \mu_1 \varphi_B + (1+\lambda\ve) \varphi\\
  &= \begin{pmatrix} \left[ \mu_2 + 1 + \lambda\ve \right] \Theta_A - m\mu_2 \id_A & \left[1+\lambda\ve\right]  \Theta_O\\
                      \left[1+\lambda\ve\right] \Theta_O^T & \left[ 1 + \lambda\ve - \mu_1 \right] \Theta_B \end{pmatrix} \circ \varphi\\
  &\eqqcolon Q \circ \varphi
\end{aligned}\end{equation}
where $\circ$ denotes the Schur product. Showing that $Q \leq 0$ will then imply the desired relation $Q \circ \varphi \leq 0$ by the Schur product theorem. From the generalized Schur complement condition, $Q \leq 0$ if and only if the following all hold~\cite{Zhang2005,Bhatia2007}:
\begin{equation}\begin{aligned}
 \cond{i} &\quad \left( \mu_2 + 1 + \lambda\ve \right) \Theta_A - m\mu_2 \id_A \leq 0,\\
 \cond{ii} &\quad \left( 1 + \lambda\ve - \mu_1 \right)\Theta_B \leq 0,\\
 \cond{iii} &\quad \left(\mu_2 + 1 + \lambda\ve \right) \Theta_A - m\mu_2 \id_A - (1+\lambda\ve)^2 \Theta_O \left( \left[ 1 + \lambda\ve - \mu_1 \right] \Theta_B \right)^{-1} \Theta_O^T \leq 0
\end{aligned}\end{equation}
where $M^{-1}$ denotes the Moore-Penrose pseudoinverse, which in particular satisfies $M^{-1} = M/\Tr(M)^2$ for any rank-one Hermitian $M$~\cite{ben-israel_2003}. From the first two conditions, we have
\begin{equation}\begin{aligned}
  \cond{i} &\quad \mu_2 \geq (1+\lambda\ve)\frac{m-k}{k},\\
  \cond{ii} &\quad \mu_1 \geq 1 + \lambda\ve,
\end{aligned}\end{equation}
and the third condition reduces to
\begin{equation}\begin{aligned}
 0 &\geq \left(\mu_2 + 1 + \lambda\ve \right) \Theta_A - m\mu_2 \id_A - \frac{(1+\lambda\ve)^2}{\left[ 1 + \lambda\ve - \mu_1 \right] \Tr\Theta_B^2} \Theta_O \Theta_B \Theta_O^T\\
  &= \left(\mu_2 + 1 + \lambda\ve \right) \Theta_A - m\mu_2 \id_A - \frac{(1+\lambda\ve)^2}{ 1 + \lambda\ve - \mu_1} \Theta_A\\
  &= \left[ \mu_2 - \frac{ (1+\lambda\ve) \mu_1 }{1 + \lambda\ve - \mu_1} \right] \Theta_A - m\mu_2 \id_A.
\end{aligned}\end{equation}
Noting that the coefficient $\mu_2 - \frac{ (1+\lambda\ve) \mu_1 }{1 + \lambda \ve - \mu_1}$ can never be negative when conditions (i) and (ii) are satisfied, we have
\begin{equation}\begin{aligned}
  \cond{iii} & \quad m \mu_2  \geq (m-k) \left[ \mu_2 - \frac{ (1+\lambda\ve) \mu_1 }{1 + \lambda\ve - \mu_1} \right].
\end{aligned}\end{equation}

To ensure that $P\geq 0$, notice that since
\begin{equation}\begin{aligned}
  P = \begin{pmatrix} \left[-\mu_2 + (1-\ve) \lambda - 1 \right] \Theta_A & \left[(1-\ve)\lambda - 1\right] \Theta_O \\
  \left[(1-\ve)\lambda - 1\right] \Theta_O^T & \left[\mu_1 + (1-\ve) \lambda - 1 \right] \Theta_B \end{pmatrix} \circ \varphi,
\end{aligned}\end{equation}
the positivity of $P$ is equivalent to the positivity of the matrix
\begin{equation}\begin{aligned}
  P' \coloneqq \begin{pmatrix} -\mu_2 + (1-\ve) \lambda - 1 & (1-\ve)\lambda - 1 \\ (1-\ve)\lambda - 1 & \mu_1 + (1-\ve) \lambda - 1 \end{pmatrix}.
\end{aligned}\end{equation}
We therefore have two additional conditions:
\begin{equation}\begin{aligned}
  \cond{iv} &\quad 0 \leq \Tr P' = 2 (1- \ve) \lambda + \mu_1 - \mu_2 - 2\\
  \cond{v} &\quad 0 \leq \Det P' = \mu_2 - (1-\ve) \lambda \mu_2 - \mu_1 \left[ 1 - (1-\ve) \lambda + \mu_2 \right].
\end{aligned}\end{equation}
Let us stress that conditions (i)--(v) together with $\mu_1 \geq 0$ constitute a set of sufficient conditions for a given choice of $(X,P,\lambda)$ of the form~\eqref{eq:ansatz} to satisfy the feasibility conditions of SDP~\eqref{eq:dual_DIO}.

We will now make the ansatz
\begin{equation}\begin{aligned}
\lambda &= \frac{\mu_1 - \mu_2}{\ve \mu_2}\\
\mu_1 &= \frac{k \mu_2 (\mu_2 - 1)}{m - k}
\end{aligned}\end{equation}
leaving $\mu_2$ as a free variable for now. With this choice, conditions (i)--(iii) are always satisfied for any $\mu_2 \in \RR$, while the other conditions reduce to
\begin{equation}\begin{aligned}
   \cond{iv} &\quad 2(k\mu_2 - m) + \ve (2 k - (2k+m) \mu_2 + k\mu_2 ^2) \geq 0\\
   \cond{v} &\quad \mu_2 \left( \ve k m (1 - \mu_2)^2 - (m - k \mu_2)^2 \right) \leq 0
\end{aligned}\end{equation}
We will now make a choice of $\mu_2$ which can be verified to satisfy the above the inequalities for any $\ve < \frac{k}{m}$ as
\begin{equation}\begin{aligned}
\mu_2 &= \frac{m + m^2\sqrt{\ve}}{k - \ve m}
\end{aligned}\end{equation}
where we note that $\mu_2 \geq 1$ and $\lim_{\ve \to 0} \mu_1 = \lim_{\ve \to 0} \mu_2 = \frac{m}{k}$. Since all conditions (i)--(v) are satisfied for our choice of $(X,P,\lambda)$ with the given $\mu_1$ and $\mu_2$ for any $\ve < \frac{k}{m}$, the triple $(X,P,\lambda)$ is a valid feasible solution for the SDP~\eqref{eq:dual_DIO}. This means in particular that for any $0 < \ve < \frac{k}{m}$ we have
\begin{equation}\begin{aligned}
  P_\DIO(\varphi \!\to\! \Psi_m,\ve) \leq \Tr X = \mu_1 \Tr \varphi_B.
\end{aligned}\end{equation}
Using the fact that $\SIO \subset \DIO$ as well as that $P_\DIO(\varphi \!\to\! \Psi_m,\ve) \geq P_\DIO(\varphi \!\to\! \Psi_m,0)$ for any $\ve$, this then gives
\begin{equation}\begin{aligned}
  P_\SIO(\varphi \!\to\! \Psi_m,0) &\leq P_\DIO(\varphi \!\to\! \Psi_m,0)\\
                   &\leq \lim_{\ve \to  0} P_\DIO(\varphi \!\to\! \Psi_m,\ve)\\
                   &\leq \frac{m}{k} \Tr\varphi_B\\
                   &= P_\SIO(\varphi \!\to\! \Psi_m,0)
\end{aligned}\end{equation}
which completes the proof.
\end{proof}
\endgroup


\section{Proof of Proposition \ref{pure to psi 2} and \ref{nontradeoff prop}}

\begingroup
\renewcommand\theproposition{\ref{pure to psi 2}}
\begin{proposition}
For $\cO\in \{\text{DIO}, \text{MIO}\}$ and any pure state $\varphi$ with $\varphi_1 \geq ... \geq \varphi_n > 0$, it holds that
\begin{equation}
  P_{\cO}(\varphi\!\to\!\Psi_2,\ve) \!=\! \left\{ \!\begin{array}{ll} 1 & \text{if $\ve\!\geq\! \ve_0(\varphi_1)$,} \\  2(1-\varphi_1^2) \left(\frac{\sqrt{1-\ve}+\sqrt{\ve}}{1-2\ve}\right)^2 & \text{otherwise,} \end{array} \right.
\end{equation}
where
\begin{align}\begin{displaystyle}\ve_0(\varphi_1) \coloneqq \left\{ \begin{array}{ll} 0 &\quad \text{if $\varphi_1\leq\frac{1}{\sqrt2}$,} \\ \frac12-\varphi_1\sqrt{1-\varphi_1^2} & \quad \text{otherwise.} \end{array}\right.
\end{displaystyle}
\end{align}
\end{proposition}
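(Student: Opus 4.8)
The plan is to pin down the common value by a sandwich: since $\DIO\subseteq\MIO$ we automatically have $P_\DIO\le P_\MIO$, so it suffices to prove a \emph{lower} bound $P_\DIO\ge\text{value}$ and a matching \emph{upper} bound $P_\MIO\le\text{value}$. I will work entirely with the SDPs of Theorem~\ref{SDP thm}, using the eliminated DIO primal for the lower bound and the MIO dual \eqref{eq:d_MIO} for the upper bound. A useful preliminary move is the positive congruence $\tilde C:=D_\varphi\,C\,D_\varphi$ with $D_\varphi=\Op{diag}(\varphi_1,\dots,\varphi_n)$: because $D_\varphi\succ0$ and $\Delta$ commutes with diagonal congruences, the reduced DIO program at $m=2$ becomes ``maximise $2\tr\tilde C$ subject to $0\le\tilde C\le 2\Delta(\tilde C)$, $\tilde C_{ii}\le\varphi_i^2/2$, and $\braandket{u}{\tilde C}{u}\ge 2(1-\ve)\tr\tilde C$'', where $\ket{u}=\sum_i\ket{i}$. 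This form isolates the role of $\varphi_1$ through the box constraints while keeping the coherence-limiting constraint $\tilde C\le 2\Delta(\tilde C)$ intact.

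For the lower bound I restrict to an \emph{arrow} ansatz carrying coherence only between the first coordinate and the tail $\{2,\dots,n\}$, diagonal on the tail: $\tilde C=\tilde a\,\ketbra{1}{1}+\sum_{i\ge2}\beta_i\big(\ketbra{1}{i}+\ketbra{i}{1}\big)+\sum_{i\ge2}\gamma_i\,\ketbra{i}{i}$. The point of suppressing tail--tail coherence is that both $\tilde C\ge0$ and $2\Delta(\tilde C)\ge\tilde C$ then collapse to the single Schur-complement condition $\tilde a\ge\sum_{i\ge2}\beta_i^2/\gamma_i$; this sidesteps the obstruction that a rank-one or full-support $\tilde C$ violates $\tilde C\le 2\Delta(\tilde C)$ as soon as $n\ge3$. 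Maximising $\sum_i\beta_i$ by Cauchy--Schwarz gives $\sum_i\beta_i=\sqrt{\tilde a\,G}$ with $G=\sum_{i\ge2}\gamma_i$, and the whole program reduces to the two-variable problem ``maximise $2(\tilde a+G)$ subject to $\tilde a\le\varphi_1^2/2$, $G\le(1-\varphi_1^2)/2$, $2\sqrt{\tilde a\,G}\ge(1-2\ve)(\tilde a+G)$''. Solving this elementary problem yields the full piecewise value: saturating both boxes is feasible exactly when $\varphi_1\sqrt{1-\varphi_1^2}\ge\tfrac{1-2\ve}{2}$, i.e.\ $\ve\ge\ve_0(\varphi_1)$, giving $2(\tilde a+G)=1$; otherwise only the tail box $G=(1-\varphi_1^2)/2$ binds and saturating the fidelity constraint (selecting the root consistent with $\tilde a\le\varphi_1^2/2$) gives $2(1-\varphi_1^2)\big(\tfrac{\sqrt{1-\ve}+\sqrt{\ve}}{1-2\ve}\big)^2$. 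Undoing the congruence ($C=D_\varphi^{-1}\tilde C D_\varphi^{-1}$) produces a feasible DIO solution, so $P_\DIO$ is at least the claimed value; note that the threshold $\ve_0$ is precisely the minimal deterministic-distillation infidelity attached to the $m$-distillation norm of~\cite{regula_2017-1}.

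The matching upper bound $P_\MIO\le\text{value}$ is the main obstacle. The difficulty is genuine: in the MIO program $G$ may be off-diagonal, and $\Delta(G)=2\Delta(C)$ does \emph{not} force $G$ diagonal, so one cannot merely dephase $G$ (that would decrease $\tr G\varphi$), and one must actually show the extra off-diagonal freedom confers no advantage for the qubit target. For $\ve\ge\ve_0$ the bound is trivial ($P\le1$). For $0<\ve<\ve_0$ I would exhibit an explicit feasible point of the dual \eqref{eq:d_MIO} with $\tr X$ equal to the claimed value, in the spirit of the dual ansatz used for Theorem~\ref{thm:dio sio}. Guided by complementary slackness with the arrow primal, in which coherence sits between $\ket{1}$ and the normalised tail direction $\ket{g}\propto\sum_{i\ge2}\varphi_i\ket{i}$ with the tail box tight, I expect $X,Y$ and the dephasing term $\Delta(Z)$ to be supported on $\Op{span}\{\ket{1},\ket{g}\}$, with $\lambda$ and the scalar weights fixed by the same $\sqrt{\ve}$-dependent root that appears in the primal optimum. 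Verifying the two operator inequalities of \eqref{eq:d_MIO} for this ansatz is the computational heart of the argument; strong duality (from the strictly feasible dual points recorded in the Remark) then yields $P_\MIO\le\tr X=\text{value}$.

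Combining the two bounds gives $\text{value}\le P_\DIO\le P_\MIO\le\text{value}$, so all coincide and equal the stated piecewise expression; in particular this simultaneously establishes $P_\MIO=P_\DIO$ for the qubit target $\Psi_2$ and shows MIO provides no advantage in this case.
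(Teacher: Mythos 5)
Your overall strategy --- an explicit primal feasible point to lower-bound $P_\DIO$, an explicit feasible point of the dual \eqref{eq:d_MIO} to upper-bound $P_\MIO$, and the sandwich $P_\DIO\le P_\MIO$ --- is exactly the paper's strategy, and your congruence reduction $\tilde C=D_\varphi C D_\varphi$ of the eliminated DIO program is correct. However, there are two genuine gaps. The first is in your lower bound: the claim that saturating both boxes is feasible ``exactly when $\varphi_1\sqrt{1-\varphi_1^2}\ge\tfrac{1-2\ve}{2}$, i.e.\ $\ve\ge\ve_0(\varphi_1)$'' hides a false equivalence. The condition $\varphi_1\sqrt{1-\varphi_1^2}\ge\tfrac{1-2\ve}{2}$ is equivalent to $\ve\ge\tfrac12-\varphi_1\sqrt{1-\varphi_1^2}$, which equals $\ve_0(\varphi_1)$ only when $\varphi_1\ge\tfrac{1}{\sqrt2}$; for $\varphi_1<\tfrac{1}{\sqrt2}$ (possible once $n\ge3$) it is strictly stronger than $\ve\ge\ve_0=0$. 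Indeed your arrow ansatz cannot reach the claimed value $1$ there: for $\ket\varphi=(\ket1+\ket2+\ket3)/\sqrt3$ and $\ve=0$, your constraint $2\sqrt{\tilde a G}\ge\tilde a+G$ forces $\tilde a=G\le\varphi_1^2/2=1/6$ by AM--GM, certifying only $2/3$, whereas $P_\DIO(\varphi\!\to\!\Psi_2,0)=1$; the protocol achieving $1$ needs coherence inside the tail block, which your ansatz suppresses. The fix is what the paper does: dispose of $\varphi_1\le\tfrac{1}{\sqrt2}$ separately via $P_\DIO(\varphi\!\to\!\Psi_2,\ve)\ge P_\SIO(\varphi\!\to\!\Psi_2,0)=\min\{2(1-\varphi_1^2),1\}=1$, and use the explicit primal point only for $\varphi_1>\tfrac{1}{\sqrt2}$, where your two-variable analysis is correct and, after undoing the congruence, essentially reproduces the paper's ansatz $G=\1-x\ketbra{1}{1}$, $C=\tfrac12 G+y\sum_{i\ge2}\varphi_1\varphi_i(\ketbra{1}{i}+\ketbra{i}{1})$.

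The second gap is more serious: the matching upper bound $P_\MIO\le 2(1-\varphi_1^2)\left(\frac{\sqrt{1-\ve}+\sqrt{\ve}}{1-2\ve}\right)^2$ is never actually proven. You correctly identify the dual feasibility verification as ``the computational heart of the argument'' and then leave it at ``I would exhibit'' and ``I expect''; a declaration of the expected support of the dual variables is not a proof, and since this inequality is the entire content of the MIO half of the proposition, the sandwich does not close. The paper does this work explicitly: with $\ket{1}$ the leading coordinate, it takes
\begin{equation*}
x=\frac{(\sqrt{1-\ve}+\sqrt{\ve})^2}{\sqrt{\ve(1-\ve)}\,(1-2\ve)},\quad
Y=2\left(\frac{\sqrt{1-\ve}+\sqrt{\ve}}{1-2\ve}\right)^2(\1-\ketbra{1}{1})\,\varphi\,(\1-\ketbra{1}{1}),\quad
Z=\frac{2}{1-2\ve}\ketbra{1}{1}\varphi\ketbra{1}{1},
\end{equation*}
and $X=Y-\Delta(Z)-(1-(1-\ve)x)\varphi$, then verifies the two operator inequalities of \eqref{eq:d_MIO}. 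Your guessed structure ($X,Y,\Delta(Z)$ supported on $\Op{span}\{\ket{1},\ket{g}\}$ with $\ket{g}\propto\sum_{i\ge2}\varphi_i\ket{i}$) is in fact consistent with this solution --- $Y$ is rank one along $\ket{g}$ and $Z$ sits on the leading coordinate --- but without fixing the weights and checking feasibility the argument is incomplete. Note also that your dual argument is explicitly restricted to $0<\ve<\ve_0$ (the weights above involve $1/\sqrt{\ve}$), so the $\ve=0$ endpoint needs a separate step; the paper gets it from Theorem~\ref{thm:dio sio} together with a limiting argument in $\ve$.
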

\begin{proof}
Since the case $\ve = 0$ follows from Theorem~\ref{thm:dio sio}, we will assume $\ve > 0$.  For the case of $\varphi_1 \leq \frac{1}{\sqrt{2}}$, we have
\begin{align}
  P_{\cO}(\varphi\!\to\!\Psi_2,\ve) \geq P_{\SIO}(\varphi\!\to\!\Psi_2,0) = 1.
\end{align}
 From the result in~\cite{Regula_2017-1}, we know that if $\ve \geq  \frac12 - \varphi_1\sqrt{1-\varphi_1^2}$, then $P_{\DIO}(\varphi\!\to\!\Psi_2,\ve) = 1 $. In the following, we therefore only consider the case $\ve <  \frac12 - \varphi_1\sqrt{1-\varphi_1^2}$.  We prove this result by explicit constructing feasible solutions in both primal and dual SDPs.
  The primal SDP under DIO is given by
  \begin{align}
  P_{\DIO}(\varphi \!\to\! \Psi_2,\ve) = \max \left\{\tr G\varphi\ \big|\ \tr C\varphi \geq (1-\ve) \tr G \varphi,\ 0 \leq C \leq G \leq \1,\ \Delta(G) = 2 \Delta(C),\ G = \Delta(G)\right\}.
  \end{align}
We take the ansatz
\begin{align}
  G = \1 - x\ket{0}\bra{0},\quad C = \frac12 G + y \sum_{i=2}^n \varphi_1\varphi_n(\ket{1}\bra{i}+\ket{i}\bra{1}).
\end{align}
Then we have
\begin{align}
  \tr G\rho = 1-x\varphi_1^2, \quad \tr C \rho  =  \frac{1}{2}(1-x\varphi_1^2)+2y\varphi_1^2(1-\varphi_1^2),
\end{align}
and
\begin{align}
  \text{spec}(C) = \text{spec}(G-C)= \left\{
  \underbrace{\frac12, \cdots, \frac12}_{n-2 \ \text{fold}},
  \frac12 - \frac{x}{4} - \sqrt{y^2\varphi_1^2(1-\varphi_1^2)+\frac{x^2}{16}}, \frac12 - \frac{x}{4} + \sqrt{y^2\varphi_1^2(1-\varphi_1^2)+\frac{x^2}{16}}\right\}.
\end{align}
Then we have the relaxation
\begin{equation}
  \begin{aligned}
    P_{\DIO}(\varphi\!\to\!\Psi_2,\ve) \geq \text{\rm maximize} &\quad 1-x \varphi_1^2\\
    \text{subject to} &\quad  4y \varphi_1^2(1-\varphi_1^2) = (1-2\ve) (1-x\varphi_1^2),\\
    &\quad  1-x \geq 4y^2\varphi_1^2(1-\varphi_1^2),\\
    &\quad 0 \leq x \leq 1.
  \end{aligned}
\end{equation}
By choosing
\begin{align}
  x = \frac{1-2(1-\varphi_1^2)(\frac{\sqrt{1-\ve}+\sqrt{\ve}}{1-2\ve})^2}{\varphi_1^2},\quad y = \frac{(\sqrt{1-\ve}+\sqrt{\ve})^2}{2\varphi_1^2(1-2\ve)},
\end{align}
we can verify that this is a feasible solution. Thus we have
\begin{align}\label{tmp1}
P_{\DIO}(\varphi\!\to\!\Psi_2,\ve) \geq 2(1-\varphi_1^2)\left(\frac{\sqrt{1-\ve}+\sqrt{\ve}}{1-2\ve}\right)^2.
\end{align}

As for the dual problem, we consider the dual SDP under MIO,
  \begin{equation}
  \begin{aligned}
  P_{\MIO}(\varphi\!\to\! \Psi_2,\ve) = \text{\rm minimize} &\quad \tr Y\\
  \text{\rm subject to} &\quad (1-(1-\ve)x)\varphi + X  + \Delta(Z) \leq Y,\\
  &\quad x\varphi - X - 2\Delta(Z) \leq 0,\\
  &\quad x \geq 0, X \geq  0, Y \geq 0
  \end{aligned}
  \end{equation}
Taking
\begin{gather}
    x = \frac{(\sqrt{1-\ve}+\sqrt{\ve})^2}{\sqrt{1-\ve}\sqrt{\ve}(1-2\ve)}, \quad Y = 2\left(\frac{\sqrt{1-\ve}+\sqrt{\ve}}{1-2\ve}\right)^2(\1-\ket{0}\bra{0})\varphi(\1-\ket{0}\bra{0}),\\
     Z = \frac{2}{1-2\ve}\ket{0}\bra{0}\varphi\ket{0}\bra{0},\quad
    X = Y - \Delta(Z) - (1-(1-\ve)x)\varphi,
  \end{gather}
  we can verify that $\{x,X,Y,Z\}$ is a valid feasible solution. Thus
  \begin{align}\label{tmp2}
    P_{\MIO}(\varphi\!\to\! \Psi_2,\ve) \leq 2(1-\varphi_1^2)\left(\frac{\sqrt{1-\ve}+\sqrt{\ve}}{1-2\ve}\right)^2.
  \end{align}
  Combining Eqs.~\eqref{tmp1} and~\eqref{tmp2}, we have the desired result.
\end{proof}
\endgroup


\begingroup

Recall that IO and SIO have the same power in pure-state transformations and it holds that
\begin{align}\label{eq:vidal supp}
P_\SSIO(\varphi\!\to\! \Psi_m,0) \!=\! \begin{cases} 0  &  \text{ if } \rank\Delta(\varphi) < m,\\
\begin{displaystyle}\min_{k\in[1,m]} \frac{m}{k} \hspace{-.3em}\sum_{i=m\!-\!k\!+\!1}^d \hspace{-.3em}\varphi_i^2 \end{displaystyle} \;\; &\text{ otherwise.}\end{cases}\raisetag{2.2\baselineskip}
\end{align}

\renewcommand\theproposition{\ref{nontradeoff prop}}
\begin{proposition}
  For any pure state $\ket{\varphi} = \sum_{i=1}^n \varphi_i \ket{i}$ with  $\varphi_i >0 $, it holds that
\begin{itemize}
  \item If $n \geq m$, $P_{\DIO}(\varphi\!\to\! \Psi_m,\ve) > 0$.
  \item If $n < m$, $P_{\DIO}(\varphi\!\to\! \Psi_m,\ve) \begin{cases} >0, \quad \ve \geq 1 - \frac{n}{m}, \\  = 0, \quad \ve < 1 - \frac{n}{m}.\end{cases}$
\end{itemize}
\end{proposition}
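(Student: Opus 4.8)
The plan is to argue directly from the primal semidefinite program for $P_\DIO$ in Theorem~\ref{SDP thm}. There the constraint $G=\Delta(G)$ forces $G$ to be diagonal, say $G=\sum_i g_i\proj i$ with $0\le g_i\le 1$, while $\Delta(G)=m\Delta(C)$ pins the diagonal of $C$ to $C_{ii}=g_i/m$. Writing $\rho=\proj\varphi$ with, without loss of generality, $\varphi_1\ge\dots\ge\varphi_n>0$, the objective is $\tr G\rho=\sum_i g_i\varphi_i^2$ and the fidelity constraint becomes $\bra\varphi C\ket\varphi\ge(1-\ve)\sum_i g_i\varphi_i^2$. The case $n\ge m$ is then immediate: since $\rank\Delta(\varphi)=n\ge m$, \eqref{eq:vidal supp} gives $P_\SSIO(\varphi\!\to\!\Psi_m,0)>0$, so by Theorem~\ref{thm:dio sio} and the monotonicity $P_\DIO(\varphi\!\to\!\Psi_m,\ve)\ge P_\DIO(\varphi\!\to\!\Psi_m,0)$ the probability is positive for every $\ve$.

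For $n<m$ the heart of the matter is a two-sided estimate of $\bra\varphi C\ket\varphi$ that exploits only the positivity $C\ge0$ together with its prescribed diagonal. For the upper bound I would use $|C_{ij}|\le\sqrt{C_{ii}C_{jj}}=\sqrt{g_ig_j}/m$ and $\varphi_i\ge0$ to obtain $\bra\varphi C\ket\varphi\le\frac1m\big(\sum_i\varphi_i\sqrt{g_i}\big)^2$, and then Cauchy--Schwarz over the support $S=\{i:g_i>0\}$, of size $s\le n$, to get $\bra\varphi C\ket\varphi\le\frac sm\sum_i g_i\varphi_i^2=\frac sm\tr G\rho$. Inserting this into the fidelity constraint yields $(1-\ve)\tr G\rho\le\frac sm\tr G\rho$; since $s\le n$, any feasible point with $\tr G\rho>0$ must satisfy $\ve\ge1-n/m$. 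Hence for $\ve<1-n/m$ every feasible solution has $\tr G\rho=0$, and therefore $P_\DIO=0$.

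For $n<m$ and $\ve\ge1-n/m$ I would exhibit a feasible point that saturates this chain and has positive objective. The natural choice is $g_i=t/\varphi_i^2$ with $t=\varphi_n^2$ (so $0\le g_i\le1$), together with the rank-one matrix $C=\frac1m\ketbra vv$ where $v_i=\sqrt{g_i}$, which makes the Cauchy--Schwarz step tight and gives $\bra\varphi C\ket\varphi=\frac nm\tr G\rho$. One verifies $C\ge0$ and $\Delta(C)=\frac1m G$ by construction; $C\le G$ reduces to $\frac1m\big\|\sum_i\ket i\big\|^2=n/m\le1$, and $G\le\1$ holds since $g_i\le1$; finally $\tr G\rho=nt>0$. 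The fidelity constraint then reads $\frac nm\ge1-\ve$, i.e.\ exactly $\ve\ge1-n/m$, so this point is feasible for all such $\ve$ and certifies $P_\DIO>0$.

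The step I expect to be the main obstacle is the sharp upper bound: recognizing that the positivity of $C$ with fixed diagonal $C_{ii}=g_i/m$ caps $\bra\varphi C\ket\varphi$ at $\frac sm\tr G\rho$, so that the support size $s\le n$ produces precisely the threshold $n/m$. Everything else is either a direct appeal to Theorem~\ref{thm:dio sio} or a verification that the proposed rank-one ansatz saturates both $|C_{ij}|\le\sqrt{C_{ii}C_{jj}}$ and the Cauchy--Schwarz inequality simultaneously.
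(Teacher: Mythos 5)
Your proof is correct, but it takes a genuinely different route from the paper's. The paper handles both halves of the $n<m$ case by a reduction to the maximally coherent input $\Psi_n$: since $\Psi_n$ converts deterministically to $\varphi$ and (by an auxiliary Lemma) $P_{\DIO}(\Psi_n\!\to\!\Psi_m,\ve)$ equals $1$ for $\ve\geq 1-\tfrac nm$ and $0$ for $\ve<1-\tfrac nm$ --- the latter certified by an explicit \emph{dual} SDP feasible point --- transitivity gives both the achievability ($\varphi\to\Psi_n\to\Psi_m^\ve$, using the SIO formula for the first leg) and the impossibility ($P_{\DIO}(\varphi\!\to\!\Psi_m,\ve)\leq P_{\DIO}(\Psi_n\!\to\!\Psi_m,\ve)=0$). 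You instead stay entirely in the \emph{primal} DIO program: your impossibility argument is the structural estimate that any PSD $C$ with diagonal pinned to $\Delta(C)=\Delta(G)/m$ satisfies $\bra{\varphi}C\ket{\varphi}\leq\frac1m\big(\sum_i\varphi_i\sqrt{g_i}\big)^2\leq\frac nm\tr G\varphi$ (via $|C_{ij}|\leq\sqrt{C_{ii}C_{jj}}$ and Cauchy--Schwarz over the $n$ indices where $\varphi_i>0$), which forces $\tr G\varphi=0$ below the threshold; your achievability is the explicit rank-one ansatz $g_i=\varphi_n^2/\varphi_i^2$, $C=\frac1m\proj{v}$ with $v_i=\sqrt{g_i}$, whose feasibility check $G-C\geq 0\Leftrightarrow n/m\leq 1$ is clean and which yields the lower bound $n\varphi_n^2>0$ at and above threshold. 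Your approach is self-contained (no dual program, no transitivity lemma, no appeal to the exact $\Psi_n\!\to\!\Psi_m$ value) and makes transparent \emph{why} the threshold is exactly $n/m$: a positive semidefinite matrix with fixed diagonal cannot overlap a rank-one state of support size $n$ by more than a factor $n/m$ of the objective. What the paper's route buys in exchange is the stronger intermediate statement that $P_{\DIO}(\Psi_n\!\to\!\Psi_m,\ve)=1$ whenever $\ve\geq 1-\tfrac nm$ (quoted in the main text), which your feasible point does not recover. The only point deserving a remark is your phrase ``support $S=\{i:g_i>0\}$ of size $s\leq n$'': if the ambient input dimension exceeds $n$, the support of $G$ may be larger, but since $\tr G\varphi$ and $\bra{\varphi}C\ket{\varphi}$ only involve the $n$ indices with $\varphi_i>0$, the Cauchy--Schwarz step should simply be restricted to those indices; this is cosmetic and does not affect correctness.
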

\begin{proof}
  For the first argument, if $n \geq m$, we know that
  \begin{align}
  P_{\DIO}(\varphi\!\to\! \Psi_m,\ve) \geq P_{\SIO}(\varphi\!\to\! \Psi_m,0) > 0,
  \end{align}
  where the second inequality follows from Eq.~\eqref{eq:vidal supp}.

Note that if $P_{\cO}(\sigma_1 \!\to\! \sigma_2,0) = 1$, then $P_{\cO}(\rho \!\to\! \sigma_2,0) \geq P_{\cO}(\rho \!\to\! \sigma_1,0)$ since we can first transform $\rho$ to $\sigma_1$ perfectly and then to get $\sigma_2$.
For the second argument, if $\ve \geq 1- \frac{n}{m}$, we have
  \begin{align}\label{tmp 1}
    P_{\DIO}(\Psi_n\!\to\!\Psi_m^\ve,0) = P_{\DIO}(\Psi_n\!\to\! \Psi_m,\ve) = 1.
  \end{align}
The first equality follows from the fact that $P_{\cO}(\rho\!\to\!\Psi_m,\ve) = P_{\cO}(\rho\!\to\!\Psi_m^\ve,0)$. The second equality follows from Lemma~\ref{psin to psim} below.
Then
\begin{equation}
  \begin{aligned}
  P_{\DIO}(\varphi\!\to\! \Psi_m,\ve) &= P_{\DIO}(\varphi\!\to\! \Psi_m^{\ve},0)\\ &\geq P_{\DIO}(\varphi\!\to\! \Psi_n,0) \\&\geq P_{\SIO}(\varphi\!\to\! \Psi_n,0)\\&> 0.
  \end{aligned}
  \end{equation}
   The first inequality follows from Eq.~\eqref{tmp 1}. The last inequality follows from Eq.~\eqref{eq:vidal supp}. If $ \ve \leq 1-\frac{n}{m}$, we have $P_{\DIO}(\varphi\!\to\! \Psi_m,\ve) \leq P_{\DIO}(\Psi_n\!\to\!\Psi_m,\ve) = 0$, where the second equality follows from Lemma~\ref{psin to psim}.
\end{proof}

\endgroup


\begingroup
\renewcommand\theproposition{S2}
\begin{lemma}\label{psin to psim}
    For any integer $n \leq m$, it holds that
    \begin{align}
    P_{\DIO}(\Psi_n\!\to\! \Psi_m,\ve) = \begin{cases} 1, \quad \ve \geq 1 - \frac{n}{m}, \\ 0, \quad \ve < 1 - \frac{n}{m}.\end{cases}
    \end{align}
\end{lemma}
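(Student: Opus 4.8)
The plan is to specialize the semidefinite program for $P_\DIO$ from Theorem~\ref{SDP thm} to the input $\rho = \Psi_n = \frac{1}{n}\sum_{i,j=1}^n \ketbra{i}{j}$, viewed on the input space $\CC^n$. The defining DIO constraint $G = \Delta(G)$ forces $G$ to be diagonal, so writing $g_i := \bra{i}G\ket{i}$ the objective collapses to $\tr G\Psi_n = \frac1n \sum_{i=1}^n g_i$, while $\Delta(G) = m\Delta(C)$ becomes simply $\bra{i}C\ket{i} = g_i/m$ for each $i$. Since $\tr C\Psi_n = \frac1n \sum_{i,j}\bra{i}C\ket{j}$, the fidelity constraint $\tr C\Psi_n \geq (1-\ve)\tr G\Psi_n$ then reads $\sum_{i,j=1}^n \bra{i}C\ket{j} \geq (1-\ve)\sum_{i=1}^n g_i$. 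With this reduction the lemma becomes a finite-dimensional optimization over the diagonal $G$ and the positive semidefinite $C$.

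For the achievability direction, assuming $\ve \geq 1 - n/m$, I would exhibit the explicit feasible point $G = \1$ and $C = \frac{n}{m}\Psi_n = \frac1m \sum_{i,j}\ketbra{i}{j}$. One checks directly that $G$ is diagonal, that $\Delta(G) = m\Delta(C) = \1$, that $0 \leq C \leq G = \1$ (as $\|C\|_\infty = n/m \leq 1$), and that $\tr C\Psi_n = n/m \geq 1-\ve = (1-\ve)\tr G\Psi_n$ exactly when $\ve \geq 1 - n/m$. The objective value is then $\tr G\Psi_n = 1$, giving $P_\DIO(\Psi_n \!\to\! \Psi_m, \ve) = 1$. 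Equivalently, one may note that the coherent embedding $\CC^n \hookrightarrow \CC^m$ is itself a DIO channel and that $F(\Psi_n, \Psi_m) = n/m$, so for $\ve \geq 1 - n/m$ the transformation already succeeds deterministically.

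For the impossibility direction, I would assume $\ve < 1 - n/m$ and suppose toward a contradiction that some feasible $(G,C)$ attains a positive objective, i.e.\ $\sum_i g_i > 0$. Since $C \geq 0$, positivity of each principal $2\times 2$ minor gives the Cauchy--Schwarz bound $|\bra{i}C\ket{j}| \leq \sqrt{\bra{i}C\ket{i}\,\bra{j}C\ket{j}} = \sqrt{g_i g_j}/m$, whence
\begin{equation*}
\sum_{i,j=1}^n \bra{i}C\ket{j} \ \leq\ \frac1m\Big(\sum_{i=1}^n \sqrt{g_i}\Big)^{2} \ \leq\ \frac{n}{m}\sum_{i=1}^n g_i,
\end{equation*}
the last step being the scalar Cauchy--Schwarz inequality $\big(\sum_i \sqrt{g_i}\big)^2 \leq n \sum_i g_i$. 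Combining this with the fidelity constraint $\sum_{i,j}\bra{i}C\ket{j} \geq (1-\ve)\sum_i g_i$ yields $(1-\ve)\sum_i g_i \leq \frac{n}{m}\sum_i g_i$; dividing by $\sum_i g_i > 0$ forces $1 - \ve \leq n/m$, contradicting $\ve < 1 - n/m$. Hence every feasible point must have $\sum_i g_i = 0$, so the optimum is $G = 0$ and $P_\DIO(\Psi_n \!\to\! \Psi_m, \ve) = 0$.

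The one genuinely delicate point is the impossibility half: the statement asserts that the success probability drops to \emph{exactly} zero rather than merely below one, so it is a strict no-go ruling out arbitrarily small $p$. This is precisely what the two nested Cauchy--Schwarz estimates accomplish, since they bound the entire off-diagonal mass of any feasible $C$ in terms of its constrained diagonal alone; by contrast, the achievability half is routine once the explicit ansatz is written down.
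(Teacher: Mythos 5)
Your proposal is correct, and it splits naturally into two halves relative to the paper. The achievability half is essentially identical to the paper's: the authors also take the primal feasible point $G = \1$, $C = \frac{n}{m}\Psi_n$ in the DIO SDP of Theorem~\ref{SDP thm} and read off the value $1$ when $\ve \geq 1 - \frac{n}{m}$. The impossibility half, however, is where you genuinely diverge. The paper certifies $P_{\DIO}(\Psi_n\!\to\!\Psi_m,\ve)=0$ for $\ve < 1-\frac{n}{m}$ by exhibiting a feasible point of value $0$ in a \emph{dual} SDP (with variables $x, X, Y, Z, W$) and invoking duality; you instead argue entirely on the primal side, showing that \emph{every} feasible $(G,C)$ has objective value $0$: the constraint $\Delta(C)=\Delta(G)/m$ pins the diagonal of $C$, the $2\times 2$-minor Cauchy--Schwarz bound $|\bra{i}C\ket{j}| \leq \sqrt{g_i g_j}/m$ controls all off-diagonal mass of $C$, and the scalar Cauchy--Schwarz estimate $\big(\sum_i \sqrt{g_i}\big)^2 \leq n\sum_i g_i$ then forces $1-\ve \leq \frac{n}{m}$ whenever $\sum_i g_i > 0$. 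Both arguments are sound, but they buy different things: the paper's dual certificate is a one-line verification once the correct dual program is written down (and fits the general pattern of SDP certificate proofs used throughout their supplement), whereas your primal argument is self-contained --- it needs neither the dual formulation nor any duality theorem --- and is arguably more transparent here, since the dual variables quoted in the paper's proof do not obviously match the dual forms stated in their own remarks. One small point worth making explicit in your write-up: since $G \geq C \geq 0$, the diagonal entries $g_i$ are nonnegative, so the square roots $\sqrt{g_i}$ are well defined; and after concluding $\sum_i g_i = 0$ one should note that $(G,C)=(0,0)$ is itself feasible, so the supremum is attained and equals $0$ rather than being over an empty set.
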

\begin{proof}
  For $\ve \geq 1 - \frac{n}{m}$, we can take feasible solution $G = \1$, $C = \frac{n}{m}\psi_n$, which gives feasible value $1$ in the primal problem. For $\ve < 1 - \frac{n}{m}$, we can take feasible solution $x = \frac{1}{1-\frac{n}{m}-\ve}$, $X = Y = 0$, $Z = \frac{1}{m-n-m\ve}\1$, $W =  \frac{1}{m-n-m\ve}(n\psi_n - \1)$, which gives feasible value $0$ in the dual problem.
\end{proof}
\endgroup

\section{Examples for catalyst-assisted enhancement}

For the catalysis scenario, we require that the catalyst is returned no matter the distillation process succeeds or not. In a more general case than the setting presented in the main text, we may accept imperfect catalyst to be returned. Denote $\delta$ as the infidelity tolerance for the returning catalyst. We define the catalysis-assisted maximal success probability as
\begin{subequations}
  \begin{align}
    P_{\cO}(\rho\xrightarrow{\gamma,\, \delta} \Psi_m,\ve) := \max &\ p\\
    \text{s.t.} &\ \Pi(\rho\ox \gamma) = p \ket{0}\bra{0}  \ox \sigma \ox \gamma_0 + (1-p) \ket{1}\bra{1}\ox \o \ox \gamma_1,\\
    &\ F(\sigma,\Psi_m) \geq 1-\ve,\, F(\gamma,\gamma_0) \geq 1-\delta,\, F(\gamma,\gamma_1) \geq 1-\delta,\\
    & \ \Pi \in \cO,\, 0 \leq p \leq 1.
  \end{align}
\end{subequations}
Using the same argument as Remark 1 and Remark 2, we can fix the output state $\sigma = \Psi_m^\ve$ and $\o = \1/m$. Then the maximal success probability can be written as 
\begin{subequations}\label{cata general}
  \begin{align}
    P_{\cO}(\rho\xrightarrow{\gamma,\, \delta} \Psi_m,\ve) = \max &\ p\\
    \text{s.t.} &\ \Pi(\rho\ox \gamma) = p \ket{0}\bra{0}  \ox \Psi_m^\ve \ox \gamma_0 + (1-p) \ket{1}\bra{1}\ox \1/m \ox \gamma_1,\\
    &\ F(\gamma,\gamma_0) \geq 1-\delta,\, F(\gamma,\gamma_1) \geq 1-\delta,\label{fidelity condition general}\\
    & \ \Pi \in \cO,\, 0 \leq p \leq 1.
  \end{align}
\end{subequations}
If the catalyst $\gamma$ we considered is the maximally coherent state $\Psi_k$, we can also fix the output catalyst $\gamma_0 = \gamma_1 = \Psi_k^\delta$. It gives
\begin{subequations}\label{SM DIO}
  \begin{align}
    P_{\cO}(\rho\xrightarrow{\Psi_k,\, \delta} \Psi_m,\ve) = \max &\ p\\
    \text{s.t.} &\ \Pi(\rho\ox \Psi_n) = p \ket{0}\bra{0}  \ox \Psi_m^\ve \ox \Psi_k^\delta + (1-p) \ket{1}\bra{1}\ox \1/m \ox \Psi_k^\delta,\\
    & \ \Pi \in \cO,\, 0 \leq p \leq 1.
  \end{align}
\end{subequations}
In the above optimization, the only variables we need to optimize over are $\Pi$ and $p$. It is thus clear that optimization~\eqref{SM DIO} is an SDP for DIO.

Consider the class of states $\rho=q \cdot v_1+ (1-q)\cdot v_2$
with
$\ket {v_1} = (\ket {00} -\ket {01} -\ket {10}+\ket {11})/2$, $\ket{v_2} = (2\ket {00} +6\ket {01} -3\ket {10}+\ket {11})/5\sqrt{2}$ and state parameter $q \in [0.1,0.5]$. We show the difference between the catalysis-assisted success probability $P_{\DIO}\big(\rho \xrightarrow{\Psi_2,\, \delta} \Psi_2,0.01\big)$ and the unassisted success probability $P_{\DIO}\big(\rho \!\to\! \Psi_2,0.01\big)$ in the following Fig.~\ref{Cata compare}. It shows that catalyst indeed enhances the maximal success probability even when requiring perfect catalyst to be returned ($\delta = 0$), while slight infidelity tolerance of the returning catalyst will boost the maximal success probability even more. On the right hand side of Fig.~\ref{Cata compare}, the enhancement ratio is given by $\big[P_{\DIO}\big(\rho \xrightarrow{\Psi_2,\, \delta} \Psi_2,0.01\big) - P_{\DIO}\big(\rho \!\to\! \Psi_2,0.01\big)\big]/P_{\DIO}\big(\rho \!\to\! \Psi_2,0.01\big)$. The enhancement ratio can achieve up to $12\%$ when $\delta=0$.

\begin{figure}[H]
\centering	
\begin{minipage}{0.45\textwidth}
	\centering
	\includegraphics[width=7cm]{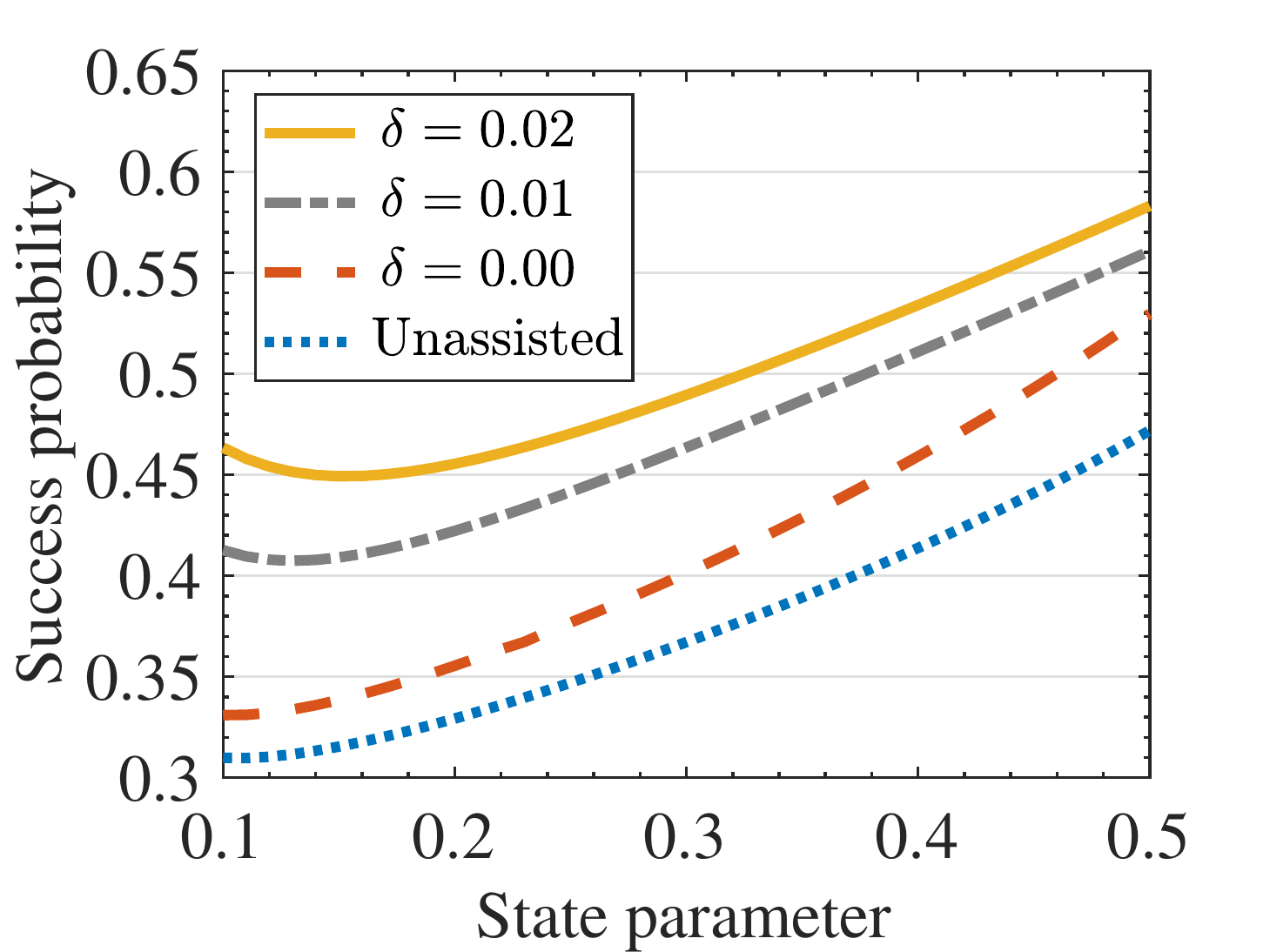}
\end{minipage}
\begin{minipage}{0.45\textwidth}
	\centering
	\includegraphics[width=7cm]{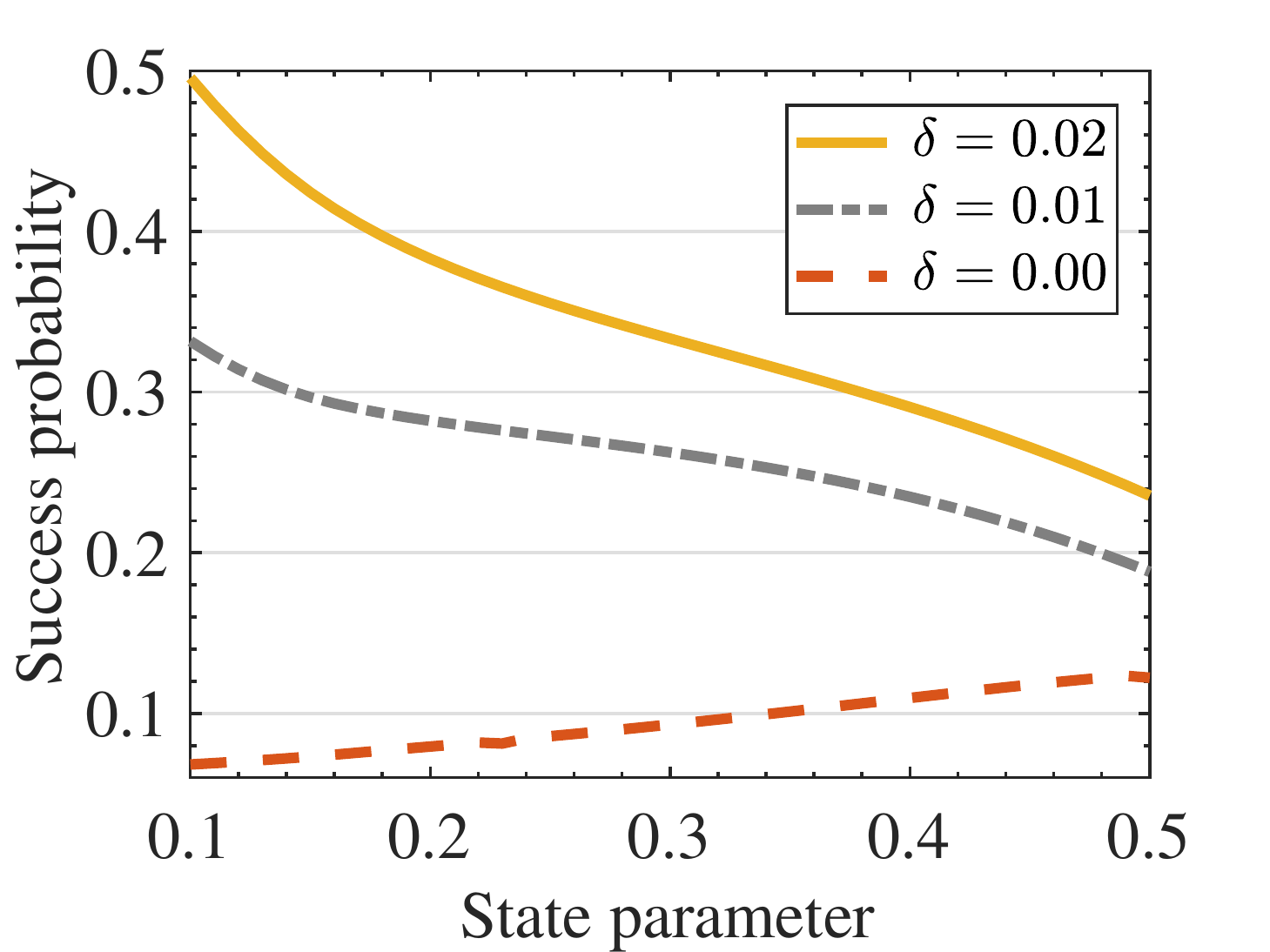}
\end{minipage}
\caption{\small Examples of catalyst-assisted probabilistic coherence distillation.}
\label{Cata compare}
\end{figure}

Similarly, we can also give another class of states showing the enhancement of catalyst. 
Consider the class of states $\rho = q\cdot u_1 +(1-q)\cdot u_2$ with $\ket{u_1} = (\ket{00}+\ket{01}+\ket{10}+\ket{11})/2$ and $\ket{u_2} = (3\ket{00}-2\ket{01}+\ket{10}+2\ket{11})/3\sqrt{2}$ and state parameter $q\in [0.2,0.7]$. We also compare $P_{\DIO}\big(\rho \xrightarrow{\Psi_2,\, \delta} \Psi_2,0.01\big)$ with $P_{\DIO}\big(\rho \!\to\! \Psi_2,0.01\big)$ and the enhancement is shown in Fig~\ref{Cata compare 2}.

\begin{figure}[H]
\centering  
\begin{minipage}{0.45\textwidth}
  \centering
  \includegraphics[width=9cm]{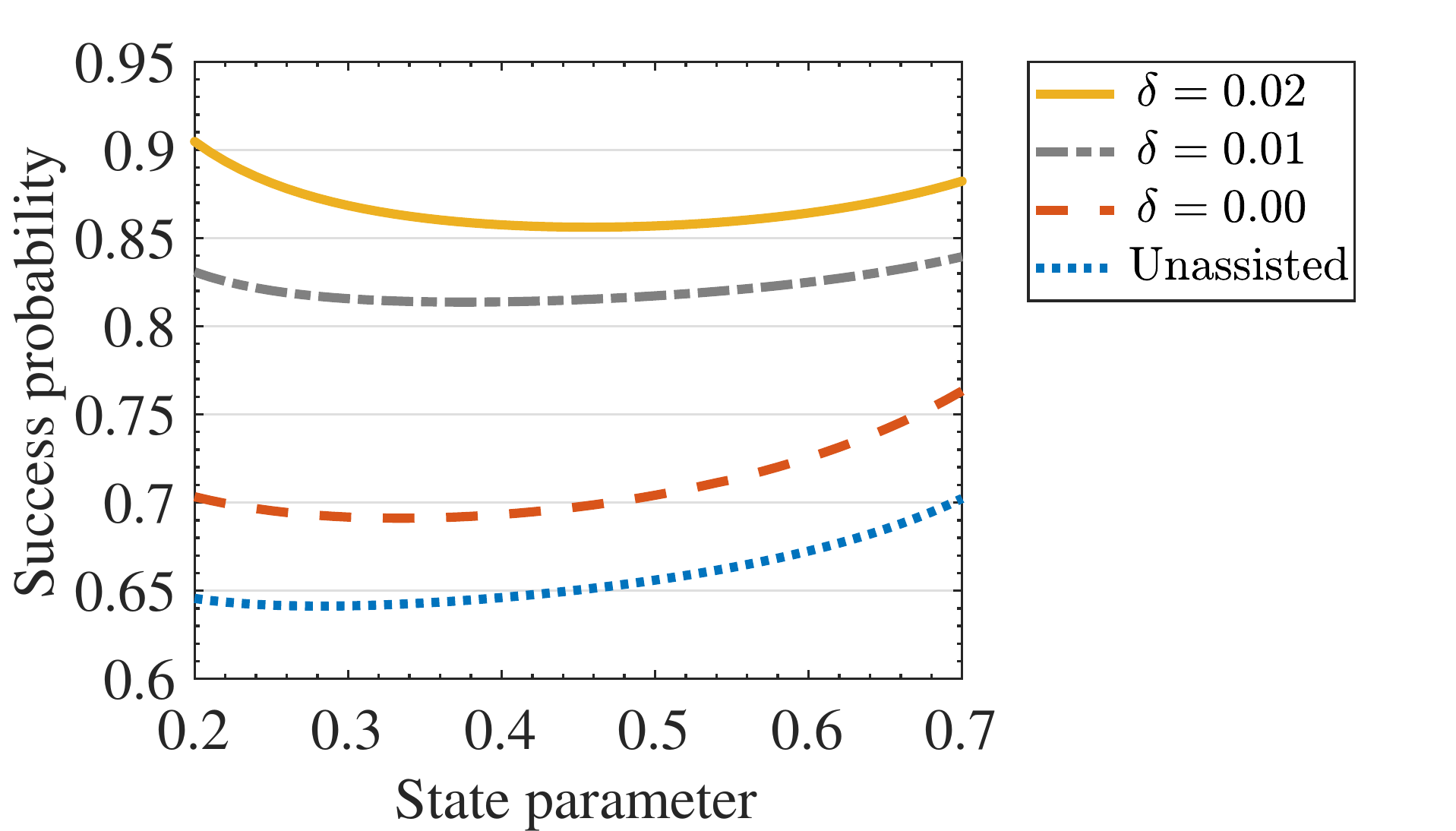}
\end{minipage}
\begin{minipage}{0.45\textwidth}
  \centering
  \includegraphics[width=7cm]{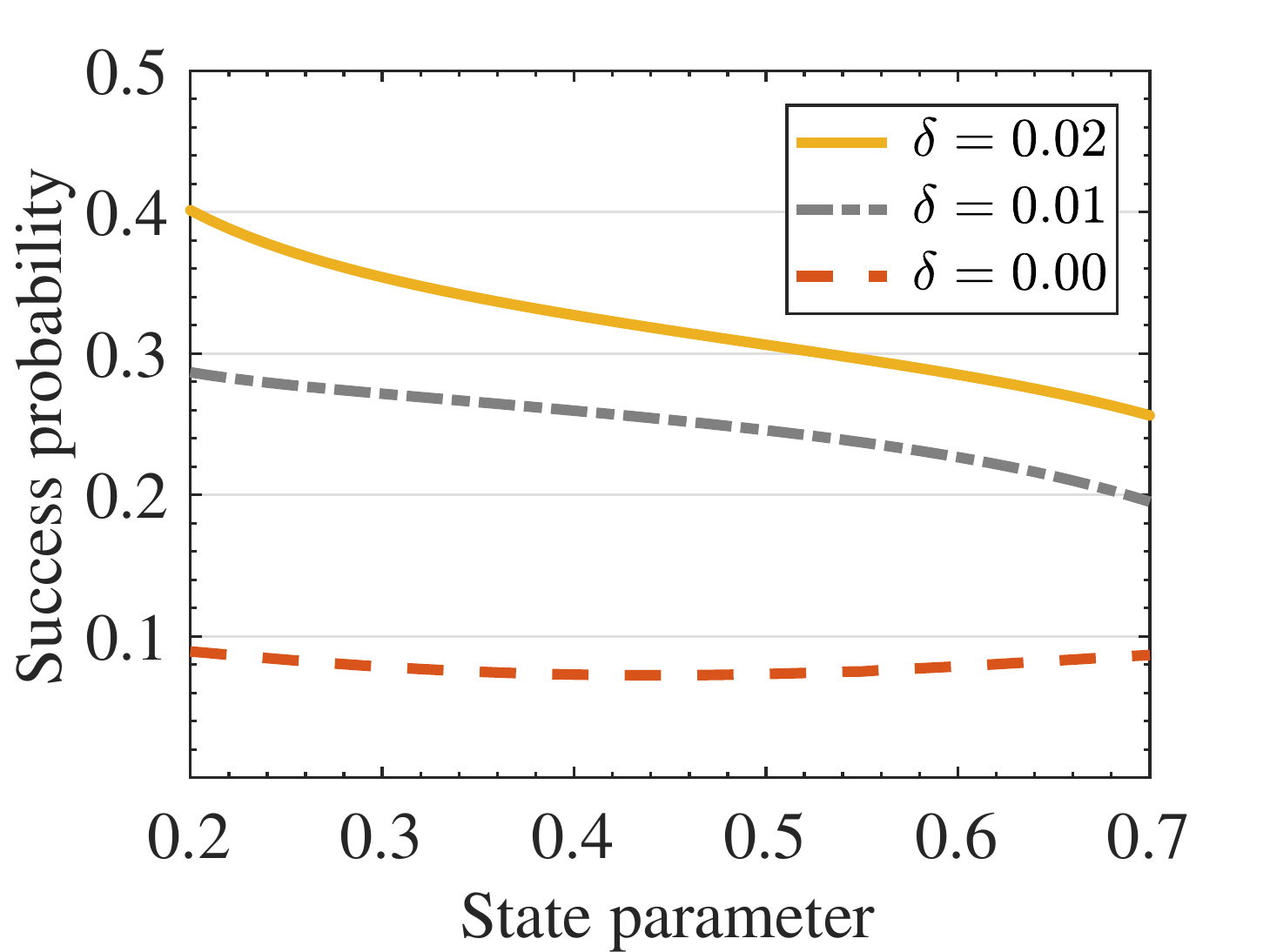}
\end{minipage}
\caption{\small Examples of catalyst-assisted probabilistic coherence distillation.}
\label{Cata compare 2}
\end{figure}

\begin{remark}
  We can denote $V=p\gamma_0$ and $W = (1-p)\gamma_1$ in the optimization \eqref{cata general}. Then for the case that catalyst $\gamma$ is a pure state, the fidelity conditions in \eqref{fidelity condition general} is equivalent to \eqref{fidelity condition pure state}. Combining with the semidefinite conditions for MIO/DIO, the maximal success probability under MIO/DIO can be written as SDPs:
 \begin{subequations}
  \begin{align}
    P_{\cO}(\rho\xrightarrow{\gamma,\, \delta} \Psi_m,\ve) = \max & \tr V\\
    \text{\rm s.t.} &\ \Pi(\rho\ox \gamma) = \ket{0}\bra{0}  \ox \Psi_m^\ve \ox V + \ket{1}\bra{1}\ox \1/m \ox W,\\
    & \tr \gamma V \geq (1-\delta) \tr V,\, \tr \gamma W \geq (1-\delta) \tr W \label{fidelity condition pure state}\\
    &\ \Pi \in \cO,\, 0 \leq p \leq 1.
  \end{align}
\end{subequations}
\end{remark}

\end{document}